\theoremstyle{plain}
\newtheorem{theorem}{Theorem}
\newtheorem{corollary}[theorem]{Corollary}
\newtheorem{lemma}[theorem]{Lemma}
\newtheorem{open}{Open Problem}
\theoremstyle{definition}
\newtheorem{definition}[theorem]{Definition}
\newtheorem{example}[theorem]{Example}
\newcommand{\edit}[1]{#1}
\title{On generalized Lyndon words}
\author[1]{Francesco Dolce}
\author[2]{Antonio Restivo}
\author[3]{Christophe Reutenauer}
\affil[1]{\small{
\edit{IRIF, Universit\'e Paris Diderot (France), \url{dolce@irif.fr}}
}}
\affil[2]{\small{
Dipartimento di Matematica e Informatica, Universit\`a degli Studi di Palermo (Italy), \url{antonio.restivo@unipa.it}
}}
\affil[3]{\small{
LaCIM, Universit\'e du Qu\'ebec \`a Montr\'eal (Qu\'ebec, Canada), \url{reutenauer.christophe@uqam.ca}
}}
\begin{document}

\maketitle

\begin{abstract}
A generalized lexicographical order on infinite words is defined by choosing for each position a total order on the alphabet.
This allows to define generalized Lyndon words.
Every word in the free monoid can be factorized in a unique way as a nonincreasing factorization of generalized Lyndon words.
We give new characterizations of the first and the last factor in this factorization as well as new characterization of generalized Lyndon words.
We also give more specific results on two special cases: the classical one and the one arising from the alternating lexicographical order.

{\it Keywords:} Generalized Lyndon words, nonincreasing Lyndon factorization, Alternating lexicographical order
\end{abstract}

\section{Introduction}
\label{sec:intro}

Let $A$ be a totally ordered alphabet.
A word $w$ is called a \emph{Lyndon word} if for each nontrivial factorization $w=uv$, one has $w < vu$ (here $<$ is the lexicographical order).
Lyndon words were introduced in~\cite{Lyndon}.
It is easy to see that this property can be expressed in an equivalent way using infinite words, namely $w^\omega < (vu)^\omega$ ( where $w^\omega = www \cdots$) for each nontrivial factorization $w =uv$.

A well-known theorem of Lyndon states that every finite word $w$ can be decomposed in a unique way as a nonincreasing product $w = \ell_1 \ell_2 \cdots \ell_n$ of Lyndon words.
This theorem, which is a combinatorial counterpart of the famous theorem of Poincaré-Birkhoff-Witt, provides an example of a factorization of the free monoid (see~\cite{L2}).
It has also many algorithmic applications and it may be computed in an efficient way. 
Indeed, Duval proposed in~\cite{D} a linear-time algorithms to compute it, while Apostolico and Crochemore proposed in~\cite{AC} a $O(\lg n)$-time parallel algorithm.
This factorization is also used in string processing algorithms (see~\cite{BDZZ}) and for the computation 
of runs in a word (see, e.g.,~\cite{CR}).

In this paper we consider a variant of this family of words: generalized Lyndon words.
These words were first introduced by the third author in~\cite{R2}.
Given a generalized order $<$, i.e., an order in which the comparison between two words depends on the length of their common prefix (see Section~\ref{sec:order} for the formal definition), a finite word $w$ is called a \emph{generalized Lyndon word} if for each nontrivial factorization $w=uv$ we have $w^\omega < (vu)^\omega$.

In this paper we present both new results and new proofs proofs of already published results concerning this family of words.
In~\cite{R2} it is proved that the family of generalized Lyndon words is a Hall set, and thus that they provide a factorization of the free monoid.
As a consequence, the associated Lie polynomials form a basis of the free Lie algebra (see~\cite{R1,R2}).
In the present paper, we give a new proof of this factorization theorem (Theorem~\ref{theo:factorization}) using only combinatorial techniques instead of the heavy machinery of Hall set theory.

Note that Nyldon words, introduced by Grinberg in~\cite{G}, also provide a factorization of the free monoid (see~\cite{CPS}), but they are not generalized Lyndon words.
Inverse Lyndon words introduced in~\cite{BDZZ} are not generalized Lyndon words neither, while anti-Lyndon words  (introduced in the same paper) with respect to a lexicographical order $<$ can be viewed as classical Lyndon words with respect to the order $\tilde<$ (see also Example~\ref{ex:lexinverse} later).

With our new combinatorial approach we are able, on one hand, to simplify several of the proofs from~\cite{R2} and, on the other hand, to obtain new interesting results.
In particular, we deduce a new characterization of the last factor of the unique nonincreasing factorization in Lyndon words (Corollary~\ref{cor:factorization3}).
We also simplify a result of~\cite{R2}, stating that generalized Lyndon words are characterised by their suffixes and show a new characterization by the prefixes (Theorem~\ref{theo:suffix}).
This last result is new even for classical Lyndon words.

Next, we focus on two particular cases of generalized orders: the classical and the alternating one.

In Theorem~\ref{theo:shortest} we give two new characterizations of the first factor of the nonincreasing factorization into classical Lyndon words.
For a different characterization of the first factor see also~\cite{D}~and~\cite[Lemma 7.14 (iii)]{R1}.

From Theorem~\ref{theo:shortest} we deduce a new proof of a result from Ufnarovskij (Corollary~\ref{cor:U}) which characterizes Lyndon words by their prefixes.

The second case we focus on, related to continued fractions, is given by Galois words.
These are generalized Lyndon words with respect to the {\em alternating lexicographical order} $<_{alt}$, that is the order comparing two words in an opposite way depending on the parity of the length of the common prefix (see Section~\ref{sec:galois} for the formal definition).
The link with continued fractions is that we have that $a_1 a_2 a_3 \cdots <_{alt} b_1 b_2 b_3 \cdots $ if and only if
$$
 a_1 + \frac{1}{a_2 + \frac{1}{a_3 + \frac{1}{\ddots}}}
 <
 b_1 + \frac{1}{b_2 + \frac{1}{b_3 + \frac{1}{\ddots}}}
$$
where $a_i, b_i \in \mathbb{N} \setminus \{ 0 \}$ for all $i$ (see also~\cite{R2}).

In Theorem~\ref{theo:p_even_or_odd} we give a characterization which generalized Ufnarovskij's Theorem to Galois words.
Moreover, we also characterize the first factor of the nonincreasing factorization in Galois words (Theorem~\ref{theo:shortest2}).
This is the analogue for Galois words of Theorem~\ref{theo:shortest}.

We conclude in Section~\ref{sec:conclusions} with some remarks and open problems. \\

\textbf{Acknowledgement.}
We thank the anonymous referee for his/her very detailed report with copious and useful suggestions.

\textbf{Dedication.}
Maurice Nivat has been one of the main figures of French school of Theoretical Computer Science.
Some of his contributions are in the field of combinatorics on words, as for instance the ones related to discrete geometry, where paths are coded by words (see~\cite{BN}).
Lyndon words, whose generalization we focus in this paper, have a significant role in discrete geometry (see, e.g.,~\cite{BLPR}).

We want to dedicate this paper, in this very international journal that he founded, to his memory.

\section{Definitions and notations}
\label{sec:def}
For undefined notation we refer to~\cite{L} and~\cite{L2}.
We denote by $A$ a finite alphabet, by $A^*$ the free monoid and by $A^+$ the free semigroup.
Elements of $A^*$ are called {\em words} and the identity element, denoted by $1$ is called the {\em empty word}.
We say that a word $u$ is a {\em factor} of the word $w$ if $w=xuy$ for some words $x,y$; $u$ is a {\em prefix} (resp. {\em suffix}) if $x=1$ (resp. $y=1$); it is {\em nontrivial} if $u \neq 1$ and {\em proper} if $u \neq w$. 
We say that $w=ps$ is a {\em nontrivial factorization} of $w$ whenever $p,s$ are both nonempty.
The length of a word $w = a_1 a_2 \cdots a_n$, where $a_i \in A$ for all $i$, is equal to $n$ and it is denoted by $|w|$.

A {\em period} of a word $a_1 a_2 \cdots a_n$ is a natural integer $p$ such that $a_i = a_{i+p}$ for any $i$ such that $i, i+p \in \{ 1, \ldots, n \}$; it is called {\em a nontrivial period} if  $0<p<n$.
A word having a nontrivial period is called {\em periodic}. 

We say that {\em $v$ is a fractional power of $u$} if $u = u_1 u_2$ and $v = u^k u_1$ for some nonnegative integer $k$.
In this case, one writes $v = u^r$, where $r = k + |u_1|/|u|$ is a positive rational.
Note that for $k=0$ (or $r<1$) this means that $v$ is a prefix of $u$.
Fractional powers are also known as \emph{sesquipowers} (see, e.g.,~\cite{PR}).

We say that the $v$ is {\em a strict fractional power of $u$} if $v$ is a fractional power of $u$ and, with the notations above, $k\geq 1$ or, equivalently, that $r\geq 1$.
In this case $u$ is a prefix of $v$.

\begin{example}
\label{ex:fractional}
Let $u = abcdef$.
Then $u^{2/3}=abcd$ and $u^{5/3}=abcdefabcd$.
The last one is, in particular, a strict fractional power of $u$.
\end{example}

We denote by $A^\omega$ the set of sequences over $A$, also called {\em infinite words}; such a sequence $(a_n)_{n \ge 1}$ is also written $a_1 a_2 a_3 \cdots$.
If $w$ is a (finite) word of length $n\geq 1$, $w^\omega$ denotes the infinite word having $w$ as a prefix and of period $n$.

We denote by  $A^\infty = A^* \cup A^\omega$ the set of finite and infinite words.

A {\em border} of a word $w$ of length $n$ is a word which is simultaneously a nontrivial proper prefix and suffix of $w$.
A word is called {\em unbordered} if it has no border.
It is well-known that a word has a border if and only if it is periodic.

Suppose that $u,v$ are finite nonempty words.
The following fact is well-known: one has $u^\omega = v^\omega$ if and only if $u,v$ are power of a common word, and this is true if and only if $u$ and $v$ commute (see, for instance,~\cite[Corollary 6.2.5]{L2}).

Observe also that if for two nonempty words $u,v$, one has $u^\omega \neq v^\omega$, then by Fine and Wilf theorem, their prefixes of length $|u|+|v|-\gcd(|u|,|v|)$ differ (see, for instance,~\cite{L}).

Given an order $<$ on the alphabet $A$, we can define the {\em lexicographic order} $<_{lex}$ (or simply $<$ when it is clear from the context) on $A^\infty$ in the following way : $u <_{lex} v$ if either $u$ is a proper prefix of $v$ (in which case $u$ must be in $A^*$) or we may write $u = pau'$, $v = pbv'$ for some words $p \in A^*$, $u',v' \in A^\infty$ and some letters $a,b \in A$ such that $a < b$.

\begin{definition}
\label{def:comparison}
Let $s, t$ be two distinct elements of $A^\omega$ such that we have a factorization $s = u_1 \cdots u_k s_0$ with $u_1, \cdots, u_k$ finite nonempty words and $s_0$ is an infinite word.
We say that {\em the comparison between $s$ and $t$ takes place within $u_k$} if $u_1 \cdots u_{k-1}$ is a prefix of $t$, but $u_1\cdots u_k$ is not.
If moreover $u_1, u_2, \ldots, u_k$ are letters we say that the comparison takes place {\em at position} $k$.
\end{definition}

Note that, when the comparison takes place within $u_k$, one may write $t = u_1 \cdots u_{k-1} u_k' t'$, for some $t' \in A^\omega$ and $u_k' \neq u_k$ such that $|u'_k| = |u_k|$.

\begin{lemma}
\label{lem:fractional}
Let $u,v$ be nonempty words such that $u^\omega\neq v^\omega$.
Then the comparison between $u^\omega$ and $v^\omega$ takes place within the first factor $v$ of $v^\omega$ if and only if $v$ is not a fractional power of $u$.
\end{lemma}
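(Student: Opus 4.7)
The plan is to unfold both sides of the equivalence into one and the same prefix condition. The key preliminary observation I would establish is that $v$ is a fractional power of $u$ if and only if $v$ is a prefix of the infinite word $u^\omega$. Indeed, writing $v = u^j u_1$ with $u = u_1 u_2$ for some integer $j \geq 0$ is exactly to say that the first $|v| = j|u| + |u_1|$ letters of $u^\omega = u^j u_1 u_2 u u \cdots$ spell out $v$; conversely, any prefix of $u^\omega$ of length $\ell$ has this form with $j = \lfloor \ell/|u| \rfloor$ and $u_1$ the prefix of $u$ of length $\ell - j|u|$.

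Next I would apply Definition~\ref{def:comparison} to $s = v^\omega$ and $t = u^\omega$ with the trivial factorization $s = v \cdot v^\omega$, i.e., $k = 1$, $u_1 = v$, $s_0 = v^\omega$. The condition ``the comparison takes place within $u_1 = v$'' then unfolds to: the empty word $u_1 \cdots u_{k-1}$ is a prefix of $t = u^\omega$ (vacuously true) and $u_1 = v$ is not a prefix of $u^\omega$. Hence the comparison between $u^\omega$ and $v^\omega$ takes place within the first factor $v$ of $v^\omega$ if and only if $v$ is not a prefix of $u^\omega$.

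Combining these two reformulations yields the lemma at once, since both sides of the biconditional have been shown equivalent to the single statement ``$v$ is not a prefix of $u^\omega$''. The hypothesis $u^\omega \neq v^\omega$ enters only to guarantee that $u^\omega$ and $v^\omega$ really do differ somewhere, so that the notion of a comparison position between them is well-defined. I do not foresee any genuine technical obstacle in carrying this out; the work is entirely in matching Definition~\ref{def:comparison} against the definition of a fractional power.
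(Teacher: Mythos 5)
Your proposal is correct and follows essentially the same route as the paper's own proof: reduce ``the comparison takes place within the first $v$'' to ``$v$ is not a prefix of $u^\omega$'' (equivalently, the length-$|v|$ prefixes of $u^\omega$ and $v^\omega$ differ), and then invoke the fact that $v$ is a fractional power of $u$ exactly when $v$ is a prefix of $u^\omega$. The only difference is that you spell out the verification of that last fact, which the paper takes as immediate from the definition.
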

\begin{proof}
The comparison between the two infinite words takes place within the first $v$ if and only if the two prefixes of length $|v|$ of $u^\omega$ and $v^\omega$ are different.
The conclusion follows from the fact that $v$ is a fractional power of $u$ if and only if $v$ is a prefix of $u^\omega$.
\end{proof}

\section{Generalized lexicographical order}
\label{sec:order}

\begin{definition}
\label{def:genord}
For each $n \in \mathbb N \setminus \{ 0 \}$, let $<_n$ be a total order on $A$.
To the sequence $(<_n)_{n \ge 1}$ we associate a total order on $A^\infty$, that we still denote by $<$ when it is clear from the context, called {\em generalized lexicographical order}, as follows:
$u < v$ if either $u$ is a proper prefix of $v$ (in which case $u$ must be in $A^*$) or we can write $u = pas, v=pbt$ for some $p \in A^*$, some $s,t \in A^\infty$ and some letters $a,b \in A$ such that $a <_{|p|+1} b$.
\end{definition}

\begin{example}
\label{ex:p}
Let $<$ be the generalized order on $\{a, b\}^\infty$ defined by $b <_n a$ if $n$ is a prime number and $a <_n b$ otherwise.
Then we have $aba < abaaa < aab < bab < baab$ and $(ab)^\omega < a^\omega  < b^\omega < (ba)^\omega$.
\end{example}

Note that, as for the classical order, when $u$ is a prefix of $v$ we could have $u^\omega \nleq v^\omega$, as shown in the next example.

\begin{example}
\label{ex:prefix}
Let $A = \{a, b\}$, and $<$ as in Example~\ref{ex:p}.
The word $ab$ is a prefix of $aba$ but $(aba)^\omega < (ab)^\omega$.
\end{example}

Let us consider a generalized lexicographical order $<$ on $A^\infty$. 

\begin{lemma}
\label{lem:comparison2}
Let $s,t \in A^\omega$ be as in Definition \ref{def:comparison} (and the sentence following it).
Then $s < t$ (resp. $s > t$) implies $u_1 \cdots u_k' s' < u_1 \cdots u_k t'$ (resp. $u_1 \cdots u_k' s' > u_1\cdots u_kt'$) for any infinite words $s',t'$.
\end{lemma}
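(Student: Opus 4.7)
The approach is to invoke Definition~\ref{def:genord} directly: under a generalized lexicographical order, the comparison between two distinct infinite words depends only on the first position where they disagree, together with the order $<_m$ attached to that position. So the plan is to identify this position for $s$ and $t$, read off the two letters appearing there, and then observe that the modified words in the conclusion disagree at exactly the same position with exactly the same two letters.

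More concretely, first I would pinpoint the position of first disagreement. The hypothesis that the comparison between $s$ and $t$ takes place within $u_k$, together with $|u_k|=|u_k'|$, forces the existence of an integer $m$ with $|u_1\cdots u_{k-1}| < m \le |u_1\cdots u_k|$ such that the first $m-1$ letters of $s$ and $t$ coincide while their $m$-th letters differ. Call these two letters $a$ (from $u_k$) and $b$ (from $u_k'$); they lie at internal position $m-|u_1\cdots u_{k-1}|$ of $u_k$ and $u_k'$ respectively, and all earlier letters of $u_k$ and $u_k'$ agree. Unfolding Definition~\ref{def:genord}, the relation $s<t$ is then equivalent to $a <_m b$.

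Next I would apply the same unfolding to the two words appearing in the conclusion. They share the common prefix $u_1\cdots u_{k-1}$ followed by the matching initial segment of $u_k$ and $u_k'$ of length $m-|u_1\cdots u_{k-1}|-1$, so their common length-$(m-1)$ prefix is the same as before; at position $m$ they carry the letters $a$ and $b$ exactly as $s$ and $t$ did. Invoking Definition~\ref{def:genord} once more, the alphabetic inequality $a <_m b$ yields the desired strict inequality at the word level, uniformly in the tails $s'$ and $t'$. The case $s>t$ is obtained by exchanging the roles of $a$ and $b$.

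I do not foresee a genuine obstacle: the argument is essentially a direct application of Definition~\ref{def:genord}. The only bookkeeping point is to verify that the position of first disagreement is preserved when the suffixes beyond $u_k$ and $u_k'$ are replaced by arbitrary infinite words, and this is immediate since that position lies strictly inside the aligned factors $u_k,u_k'$ of equal length.
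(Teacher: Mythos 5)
The paper states Lemma~\ref{lem:comparison2} without any proof, treating it as an immediate consequence of Definition~\ref{def:genord}; your plan --- locate the first position $m$ of disagreement, observe that it lies strictly inside the aligned equal-length factors $u_k,u_k'$, and note that replacing the tails beyond those factors cannot move it --- is exactly the argument the authors have in mind, so there is no competing approach to compare against.

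There is, however, one piece of bookkeeping that you get wrong, and it matters. In your notation $s$ carries the letter $a$ (inside $u_k$) at position $m$ and $t$ carries $b$ (inside $u_k'$), so $s<t$ gives $a<_m b$. But in the conclusion as printed, the left-hand word is $u_1\cdots u_{k-1}u_k's'$, which carries $b$ at position $m$, while the right-hand word is $u_1\cdots u_{k-1}u_kt'$, which carries $a$ there --- the \emph{opposite} arrangement from $s$ and $t$, not ``exactly as $s$ and $t$ did'' as you assert. Taken literally, $a<_m b$ therefore yields $u_1\cdots u_k's' > u_1\cdots u_kt'$, the reverse of the displayed inequality (take $k=1$, $u_1=a$, $u_1'=b$, $s=a^\omega$, $t=b^\omega$ with the usual order to see that the printed inequality cannot hold). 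What has happened is that the statement in the paper transposes $u_k$ and $u_k'$ in the conclusion: every subsequent use of the lemma (e.g.\ in the proofs of Lemma~\ref{lem:p<s}(i) and Lemma~\ref{lem:hprefixg}) requires the version $u_1\cdots u_ks' < u_1\cdots u_k't'$, in which the distinguished factors stay on the same sides as in $s$ and $t$, and that is the version your argument actually establishes. You should say explicitly that this is the statement you are proving; as a proof of the lemma as literally written, your final step is false.
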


\begin{lemma}
\label{lem:p<s}
Let $u,v$ be nonempty finite words such that $u^\omega < v^\omega$ and let $x,y$ be two finite words.
Then
\begin{enumerate}
 \item[\rm (i)] if neither $u$ or $v$ is a prefix of the other, then $(ux)^\omega < (vy)^\omega$;
 \item[\rm (ii)] if $v$ is not a fractional power of $u$, then $(u^{k+1} x)^\omega < (v y)^\omega$, where $k$ is the largest integer such that $u^k$ is a prefix of $v$.
 In particular $u^\omega < (vy)^\omega$.
\end{enumerate}
\end{lemma}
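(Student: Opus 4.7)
The plan is to locate the position $j$ at which $u^\omega$ and $v^\omega$ first differ, and then to verify that this same position still witnesses the desired inequality when $u^\omega$ is replaced by $(ux)^\omega$ (resp.\ $(u^{k+1}x)^\omega$) and $v^\omega$ by $(vy)^\omega$; the only nontrivial thing to check is that $j$ does not exceed the length of the new initial block on either side, after which Lemma~\ref{lem:comparison2} finishes the job.

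For part~(ii), I first claim that $k|u|<j\leq(k+1)|u|$. The lower bound is immediate, since $u^k$ is a common prefix of $u^\omega$ and of $v^\omega$ (by hypothesis $u^k$ is a prefix of $v$). For the upper bound, Lemma~\ref{lem:fractional} gives $j\leq|v|$, which already suffices when $|v|\leq(k+1)|u|$; otherwise, the maximality of $k$ ensures that $u^{k+1}$ is not a prefix of $v$, so the prefixes of length $(k+1)|u|$ of $u^\omega$ and of $v^\omega$ already differ, whence $j\leq(k+1)|u|$. This upper bound is really the only delicate step. With it in hand, the prefix of length $j$ of $(u^{k+1}x)^\omega$ coincides with that of $u^\omega$ (both begin with $u^{k+1}$), while the prefix of length $j$ of $(vy)^\omega$ coincides with that of $v^\omega$ (since $j\leq|v|$). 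The letters at position $j$ are therefore unchanged, and Lemma~\ref{lem:comparison2} yields $(u^{k+1}x)^\omega<(vy)^\omega$. The ``in particular'' statement then follows by setting $x=1$ and observing that $(u^{k+1})^\omega=u^\omega$.

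Part~(i) I would then deduce as a special case of (ii) with $k=0$. The hypothesis that neither of $u,v$ is a prefix of the other forces $u$ not to be a prefix of $v$, so $k=0$; and it also rules out $v$ being a fractional power of $u$, since such a $v$ is either a prefix of $u$ (if $|v|\leq|u|$) or admits $u$ as a prefix (if $|v|>|u|$), both of which are excluded. Applying (ii) then yields $(ux)^\omega<(vy)^\omega$, completing the proof.
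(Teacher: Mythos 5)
Your proof is correct and follows essentially the same route as the paper: both locate the position of the first difference between $u^\omega$ and $v^\omega$, check that it lies within $u^{k+1}$ on one side and within $v$ on the other (your bound $k|u|<j\le(k+1)|u|$ plays the role of the paper's explicit factorization $u=u'au_1$, $v=u^ku'bv_1$ with position $m+1=|u^ku'|+1$), and conclude with Lemma~\ref{lem:comparison2}. The only organizational difference is that you deduce (i) as the $k=0$ case of (ii), whereas the paper proves (i) directly by noting the comparison occurs within the prefix of length $\min(|u|,|v|)$; both are valid.
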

\begin{proof} 
In case (i), the comparison between the two infinite words takes place within the prefix of length $\min(|u|,|v|)$.
Hence we conclude using Lemma~\ref{lem:comparison2}.

Suppose now that the hypothesis of (ii) holds.
Then we can write $u = u'au_1$ and $v = u^ku'bv_1$, with $u' \in A^*$, $a,b\in A$ such that $a\neq b$, and $u_1, v_1 \in A^*$.
Let $m = |u^k u'|$.
Since $u^\omega = u^k u' a u_1 u^\omega$ and since $u^\omega < v^\omega$, we have that $a <_{m+1} b$.
The two infinite words $u^\omega$ and $(u^{k+1}x)^\omega$ share the same prefix of length $m+1$, and the same do $v^\omega$ and $(vy)^\omega$.
Thus the comparison between between $(u^{k+1}x)^\omega$ and $(vy)^\omega$ takes place at position $m+1$.
Since $a <_{m+1} b$, we can conclude.
\end{proof}

We use several times the following observation: the opposite order $\tilde<$ of a generalized order $<$ is also a generalized lexicographical order, obtained by reversing all the orders $<_i$.

\begin{example}
\label{ex:lexinverse}
Let $<$ be the usual lexicographical order on $\{a, b\}$, that is such that $a <_i b$ for all $i \ge 1$.
Then $\tilde<$ is defined by $b~\tilde<_i~a$ for all $i \ge 1$.
\end{example}

\begin{example}
Let $<$ be the generalized order defined in Example~\ref{ex:p}.
Then we have $(aba)^\omega < (aab)^\omega < (bab)^\omega < (baa)^\omega$ and $(baa)^\omega~\tilde<~(bab)^\omega~\tilde<~(aab)^\omega~\tilde<~(aba)^\omega$.
\end{example}

Part of the following lemma is stated in~\cite{R2}.
\begin{lemma}
\label{lem:uuvv}
The following conditions are equivalent for nonempty words $u,v \in A^*$:
\begin{enumerate}
 \item[\rm (1)] $u^\omega < v^\omega$;
 \item[\rm (2)] $(uv)^\omega < v^\omega$;
 \item[\rm (3)] $u^\omega < (vu)^\omega$;
 \item[\rm (4)] $(uv)^\omega < (vu)^\omega$.
\end{enumerate}
\end{lemma}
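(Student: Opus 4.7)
The plan is to show that all four inequalities reduce to one and the same comparison under $<_m$ of two specific letters, where $m$ is the first position at which $u^\omega$ and $v^\omega$ disagree. I would open by disposing of the degenerate case $u^\omega = v^\omega$: by the commutativity remark in Section~\ref{sec:def} this forces $u$ and $v$ to be powers of a common word, hence $(uv)^\omega = (vu)^\omega = u^\omega = v^\omega$, and all four strict inequalities are simultaneously false, hence trivially equivalent.

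Assuming $u^\omega \neq v^\omega$, let $m$ be the first differing position, $p$ the common prefix of length $m-1$, and $a$, $b$ the letters occupying position $m$ in $u^\omega$ and $v^\omega$ respectively. The central claim I would establish is that both $u^\omega$ and $(uv)^\omega$ begin with the length-$m$ prefix $pa$, while both $v^\omega$ and $(vu)^\omega$ begin with $pb$. Granting this, each of (1)--(4) has on its left-hand side an infinite word beginning with $pa$ and on its right an infinite word beginning with $pb$; by Definition~\ref{def:genord}, every one of the four inequalities is therefore equivalent to $a <_m b$, and they are mutually equivalent.

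To prove the central claim I would handle only the $(uv)^\omega$ half, the other being obtained by the obvious $u \leftrightarrow v$ symmetry. Since $(uv)^\omega$ and $u^\omega$ both start with $u$, the case $m \leq |u|$ is immediate. For $m > |u|$ I would use the identity $(uv)^\omega = u \cdot (vu)^\omega$ to reduce the problem to showing that $u^\omega$ and $(vu)^\omega$ share a prefix of length $m - |u|$; because $(vu)^\omega$ starts with $v$, this boils down to checking that the first $m-|u|$ letters of $v$ agree with those of $u^\omega$, which in turn follows from the fact that $u^\omega$ and $v^\omega$ coincide up to position $m-1$ (and $m-|u| \le m-1$), \emph{provided} that $m - |u| \leq |v|$.

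The main obstacle is precisely this last inequality $m - |u| \leq |v|$: a priori, the common prefix of $u^\omega$ and $v^\omega$ could extend well past the first $|u|+|v|$ positions, in which case a single initial block $v$ of $(vu)^\omega$ would not suffice to carry $p$ and a more delicate analysis would be needed. I would close this gap by invoking the Fine and Wilf inequality recalled in Section~\ref{sec:def}, which guarantees $m - 1 < |u| + |v| - \gcd(|u|,|v|)$ and hence $m - |u| < |v|$, completing the proof.
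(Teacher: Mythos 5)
Your proof is correct, and it takes a genuinely different route from the paper's. The paper fixes condition (1) and splits into cases according to whether $v$ is not a fractional power of $u$, is a strict fractional power, or is a proper prefix, handling the first case via Lemma~\ref{lem:p<s}(ii), the second via Fine and Wilf, and the third by passing to the opposite order $\tilde<$; the converse implications are then obtained by running the whole argument again for $\tilde<$. Your argument instead reduces each of the four inequalities to the single condition $a <_m b$, where $m$ is the first position at which $u^\omega$ and $v^\omega$ differ: the key point, that $(uv)^\omega$ shares the length-$m$ prefix of $u^\omega$ and $(vu)^\omega$ shares that of $v^\omega$, is secured uniformly by the Fine--Wilf bound $m \le |u|+|v|-\gcd(|u|,|v|)$, which gives $m-|u| < |v|$ and lets the initial block $v$ of $(vu)^\omega$ supply the remaining $m-|u|$ positions (which then agree with $u^\omega$ because $m-|u| \le m-1$). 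This buys a single uniform argument with no case distinction on fractional powers, no appeal to Lemma~\ref{lem:p<s}, and no opposite-order trick --- both directions of every equivalence come for free, since all four comparisons are shown to be decided at the same position by the same pair of letters. The paper's strict-fractional-power case is the closest in spirit to what you do (it also compares prefixes of length $|u|+|v|$ via Fine and Wilf), but your version makes that one idea carry the entire proof.
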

\begin{proof}
The four conditions obtained by replacing in the lemma $<$ by $=$ are equivalent (see again~\cite[Corollary 6.2.5]{L2}).
We may therefore assume that none of these equalities holds.

Let us assume first that condition (1) holds and prove that the other conditions hold too.

If $v$ is not a fractional power of $u$, then conditions (2), (3) and (4) hold, by point (ii) of Lemma~\ref{lem:p<s}, with $x=u'bv_1$ and $y=1$ for case (2), $x=1$ and $y=u$ for case (3), and $x=u'bv_1$ and $y=u$ for case (4).

Let us suppose that $v$ is a strict fractional power of $u$.
We may therefore write $v = u^ju_1$, for some $j \ge 1$, and $u = u_1 u_2$.
Using the observation in the previous section, deduced from the Fine and Wilf theorem, we see that the prefixes of length $|u|+|v|$ of $u^\omega$ and $v^\omega$ are distinct, i.e., $u^{j+1} u_1 \ne u^j u_1 u$.
Since both $u^\omega$ and $(uv)^\omega$ begin by $u^{j+1} u_1$ and both $v^\omega$ and $(vu)^\omega$ begin by $u^j u_1 u$, and since the comparison in all four cases is done in the prefix of length $|u|+|v|$, we can conclude that (2), (3) and (4) hold.

Let us now consider the case when $v$ is a nonstrict fractional power of $u$, i.e., $v$ is a proper prefix of $u$.
That implies that either $u$ is not a fractional power of $v$ or $u$ is a strict fractional power of $v$.
Let us consider $\tilde<$ the opposite order of $<$.
Since $v^\omega~\tilde<~u^\omega$, from what we have seen above, we have that $(vu)^\omega~\tilde<~u^\omega$, $v^\omega~\tilde<~(uv)^\omega$ and $(vu)^\omega~\tilde<~(uv)^\omega$.
Thus, conditions (2), (3) and (4) hold.

Finally, let us suppose that the negation of (1) holds, that is that $u^\omega~\tilde<v~^\omega$ (remember that we supposing $u^\omega \ne v^\omega$).
Then, using the same reasoning as above, we have $(uv)^\omega~ \tilde<~v^\omega$, $u^\omega~\tilde<~(vu)^\omega$ and $(uv)^\omega~\tilde<~(vu)^\omega$, i.e., the negations of the three last conditions.
This shows that each of the condition (2), (3) or (4) implies (1).
\end{proof}

\section{Generalized Lyndon words}
\label{sec:words}
In this section we introduce generalized Lyndon words.

\begin{definition}
\label{def:genlyndon}
Given an alphabet $A$ and a generalized order $<$ on $A^\infty$
we say that a finite word $w \in A^+$ is a \emph{generalized Lyndon word} if for any nontrivial factorization $w = uv$ one has $w^\omega < (vu)^\omega$.
\end{definition}

\begin{example}
\label{ex:genlyn}
Let $A = \{ a,b \}$ and $<$ be the order defined in Example~\ref{ex:p}.
The word $w=abba$ is a generalized Lyndon word for the order $<$.
Indeed, one can easily check that $(abba)^\omega < (aabb)^\omega < (bbaa)^\omega < (baab)^\omega$.
\end{example}

\subsection{Characterization of generalized Lyndon words}

In the next theorem we give two characterizations of generalized Lyndon words.
Recall that a classical result due to Lyndon states that a word $w$ is a classical Lyndon word if and only if $w <_{lex} v$ for any nontrivial proper suffix of $w$ (see~\cite[Proposition 5.1.2]{L}).

The second part of the next result has already been proved in~\cite[Proposition 2.1]{R2}.
We give here a shorter proof.

\begin{theorem}
\label{theo:suffix}
Let us consider a generalized lexicographical order $<$ on $A^\infty$.
\begin{enumerate}
 \item[1.] A word $w$ is a generalized Lyndon word if and only if for any nontrivial factorization $w = uv$, one has $u^\omega < v^\omega$.\footnote{One may find on Wikipedia the following characterization (without proof nor references): $w$ is a classical Lyndon word if and only if for each nontrivial factorization $w=uv$ one has $u<v$. Our condition is not an extension of this condition to generalized Lyndon words. Indeed, if one take the usual order $a<b$, one has $b<ba$ but $b^\omega > (ba)^\omega$.}
 \item[2.] A word $w$ is a generalized Lyndon word if and only if for any nontrivial factorization $w = uv$, one has $w^\omega < v^\omega$.
\end{enumerate}
\end{theorem}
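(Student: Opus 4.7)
The plan is to reduce both equivalences directly to Lemma~\ref{lem:uuvv}, which asserts that the four strict inequalities $u^\omega < v^\omega$, $(uv)^\omega < v^\omega$, $u^\omega < (vu)^\omega$, and $(uv)^\omega < (vu)^\omega$ are equivalent for any nonempty finite words $u,v$. The heavy combinatorial work (essentially Fine and Wilf) is already encapsulated in that lemma, so the proof of Theorem~\ref{theo:suffix} should amount to a bookkeeping exercise.

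For part~1, I would fix any nontrivial factorization $w = uv$. The definition of a generalized Lyndon word demands $w^\omega < (vu)^\omega$, which is literally $(uv)^\omega < (vu)^\omega$; by Lemma~\ref{lem:uuvv} this is equivalent to $u^\omega < v^\omega$. Taking the conjunction over every nontrivial factorization gives the stated characterization. For part~2, the same opening rewriting produces $(uv)^\omega < (vu)^\omega$, and now I would invoke the equivalence with $(uv)^\omega < v^\omega$ furnished by Lemma~\ref{lem:uuvv}, which is precisely $w^\omega < v^\omega$.

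The only minor point to check is that Lemma~\ref{lem:uuvv} concerns strict inequalities, and its proof opens by excluding the case $u^\omega = v^\omega$. This is automatic in our setting: if $u^\omega = v^\omega$ for a factorization $w=uv$, then $u$ and $v$ are powers of a common word, hence $uv=vu$, and then $w^\omega=(uv)^\omega=(vu)^\omega=v^\omega$, contradicting any of the three strict inequalities $w^\omega < (vu)^\omega$, $u^\omega < v^\omega$, or $w^\omega < v^\omega$ that appear. So in each implication the hypothesis rules out the degenerate case, and Lemma~\ref{lem:uuvv} applies unconditionally.

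I do not foresee a real obstacle: the content of the theorem is essentially a reinterpretation of Lemma~\ref{lem:uuvv} through the definition $w^\omega < (vu)^\omega$, which is what makes the new proof of part~2 shorter than the one in~\cite{R2} and simultaneously yields the prefix version in part~1 at no extra cost.
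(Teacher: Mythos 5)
Your proposal is correct and matches the paper's own proof: both reduce the definition $w^\omega<(vu)^\omega$ to condition (4) of Lemma~\ref{lem:uuvv} and read off the equivalences with conditions (1) and (2). Your extra check of the degenerate case $u^\omega=v^\omega$ is harmless but unnecessary, since Lemma~\ref{lem:uuvv} is stated unconditionally for nonempty words.
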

\begin{proof}
By definition, $w$ is a generalized Lyndon words if and only if for each nontrivial factorization $w = uv$, one has $(uv)^\omega < (vu)^\omega$.
By Lemma \ref{lem:uuvv}, this is equivalent both to $u^\omega < v^\omega$ and to $(uv)^\omega < v^\omega$, i.e. $w^\omega < v^\omega$.
\end{proof}

\begin{example}
Let $w, A$ and $<$ as in Example~\ref{ex:genlyn}.
Let us consider the nontrivial factorization $w=uv$ with $u=abb$ and $v=a$.
One has $(abb)^\omega < a^\omega$ and $(abba)^\omega < a^\omega$.
\end{example}

\subsection{Factorization into generalized Lyndon words}

The following result is already proved in \cite[Theorem 2.1]{R2} using the theory of Hall sets.
We give here an independent proof, especially for the uniqueness part, using only combinatorial arguments.

\begin{theorem}
\label{theo:factorization}
Each word in $A^*$ can be factorized in a unique way as a nonincreasing product of generalized Lyndon words.
\end{theorem}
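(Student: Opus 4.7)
My plan is to prove both existence and uniqueness by induction on $|w|$, using an explicit characterization of the last factor. For nonempty $w \in A^*$, let $s(w)$ denote the shortest suffix of $w$ whose $\omega$-power is minimum, under the generalized order, among $\omega$-powers of suffixes of $w$. First I verify that $s(w)$ is itself a generalized Lyndon word: for any nontrivial factorization $s(w) = uv$, the suffix $v$ of $s(w)$ is also a strictly shorter suffix of $w$, so $s(w)^\omega \le v^\omega$ by the minimality of $s(w)^\omega$; equality is excluded by the length-minimality of $s(w)$ (otherwise $v$ would be a shorter minimizer). Thus $s(w)^\omega < v^\omega$, and Theorem~\ref{theo:suffix}(2) yields that $s(w)$ is generalized Lyndon.

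For existence, I write $w = w' \cdot s(w)$ and invoke the inductive hypothesis on the strictly shorter word $w'$ to obtain a nonincreasing Lyndon factorization $w' = \ell_1 \cdots \ell_{n-1}$. Appending $s(w)$ yields a factorization into generalized Lyndon words, and it is nonincreasing at the new boundary because $\ell_{n-1}$ is a suffix of $w'$, so $\ell_{n-1} \cdot s(w)$ is a suffix of $w$, whence $(\ell_{n-1}\cdot s(w))^\omega \ge s(w)^\omega$ by the definition of $s(w)$. Applying Lemma~\ref{lem:uuvv} in the form $u^\omega < v^\omega \iff (uv)^\omega < v^\omega$ with $u = \ell_{n-1}$ and $v = s(w)$ rules out $\ell_{n-1}^\omega < s(w)^\omega$, giving $\ell_{n-1}^\omega \ge s(w)^\omega$ as required.

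For uniqueness, I show that any nonincreasing Lyndon factorization $w = \ell_1 \cdots \ell_n$ satisfies $\ell_n = s(w)$; stripping off $\ell_n$ and invoking induction on $w \cdot s(w)^{-1}$ then finishes the argument. Since $\ell_n$ is Lyndon, Theorem~\ref{theo:suffix}(2) gives $\ell_n^\omega < v^\omega$ for every proper nontrivial suffix $v$ of $\ell_n$, so no suffix of $w$ strictly shorter than $\ell_n$ can tie $\ell_n^\omega$. It remains to show $\ell_n^\omega \le s^\omega$ for every suffix $s$ of $w$ with $|s| > |\ell_n|$. Writing $s = y\,\ell_{j+1}\cdots\ell_n$ with $j < n$ and $y$ a nonempty suffix of $\ell_j$, I induct on $n - j$: the base case $n - j = 1$ gives $s = y\ell_n$ with $y^\omega \ge \ell_j^\omega \ge \ell_n^\omega$ (using Theorem~\ref{theo:suffix}(2) when $y$ is a proper suffix of $\ell_j$, combined with the nonincreasing chain), and Lemma~\ref{lem:uuvv} with $u = y$, $v = \ell_n$ delivers $s^\omega \ge \ell_n^\omega$; the inductive step sets $t = \ell_{j+1}\cdots\ell_n$ (so that $t^\omega \ge \ell_n^\omega$ by the induction hypothesis) and applies Lemma~\ref{lem:uuvv} to $s = y\cdot t$. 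Together with the Lyndon-based observation on shorter suffixes, this shows $\ell_n$ is the unique shortest minimizer, i.e., $\ell_n = s(w)$.

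The principal obstacle is the bound $s^\omega \ge \ell_n^\omega$ in the uniqueness step, particularly the subcase $y^\omega < t^\omega$ of the inductive step: there Lemma~\ref{lem:uuvv} only gives $(yt)^\omega < t^\omega$, and one still needs $(yt)^\omega \ge \ell_n^\omega$. The key is that $y^\omega \ge \ell_{j+1}^\omega$ together with $y^\omega < t^\omega$ forces $t^\omega > \ell_{j+1}^\omega$, so $t$ cannot be a fractional power of $\ell_{j+1}$ (otherwise $t^\omega = \ell_{j+1}^\omega$); Lemma~\ref{lem:fractional} then pins down the position at which the strict comparison $t^\omega > \ell_{j+1}^\omega$ is first witnessed, and one tracks how this position propagates to $s^\omega$ versus $\ell_n^\omega$. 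The delicacy is unavoidable here because, unlike the classical setting, generalized Lyndon words can have borders (Example~\ref{ex:genlyn}) and $\omega$-powers are not monotone under prefix extension (Example~\ref{ex:prefix}), so the classical shortcut identifying the last factor with the lexicographically smallest suffix must be recast entirely in terms of $\omega$-powers and the equivalences of Lemma~\ref{lem:uuvv}.
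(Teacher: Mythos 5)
Your overall strategy coincides with the paper's: both existence and uniqueness are reduced to the characterization of the last factor as the shortest nontrivial suffix whose $\omega$-power is minimum, and your existence argument (including the verification that this suffix is a generalized Lyndon word via Theorem~\ref{theo:suffix} and the nonincreasing check at the new boundary via Lemma~\ref{lem:uuvv}) is exactly the paper's. The problem is in the uniqueness step, and it is the one you yourself flag: in your induction on $n-j$ for $s = y\,\ell_{j+1}\cdots\ell_n = y\,t$, the subcase $y^\omega < t^\omega$ is a genuine case (for generalized orders one can indeed have $t^\omega = (\ell_{j+1}\cdots\ell_n)^\omega > \ell_{j+1}^\omega$; see Lemma~\ref{lem:g_1>or<}(i) and Corollary~\ref{cor:g_1w}, which show this happens already for Galois words), and there Lemma~\ref{lem:uuvv} only yields $(yt)^\omega < t^\omega$, which is the wrong direction. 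Your proposed repair --- invoking Lemma~\ref{lem:fractional} to locate where $t^\omega > \ell_{j+1}^\omega$ is witnessed and ``tracking how this position propagates'' --- is not carried out and is not obviously correct: controlling $(yt)^\omega$ against $\ell_n^\omega$ from $y^\omega \ge \ell_n^\omega$ alone requires a fractional-power case analysis (via Lemma~\ref{lem:p<s}) that you have not supplied. As it stands the inductive step is incomplete, so the proof has a genuine gap.

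The paper avoids this trap by arguing by contradiction and peeling factors from the \emph{right} rather than the left. Assuming $s = v\,\ell_i\cdots\ell_n$ satisfies $s^\omega < \ell_n^\omega$, one applies the equivalence $(uv)^\omega < v^\omega \Leftrightarrow u^\omega < v^\omega$ of Lemma~\ref{lem:uuvv} with the rightmost factor playing the role of $v$: this gives $(v\,\ell_i\cdots\ell_{n-1})^\omega < \ell_n^\omega \le \ell_{n-1}^\omega$, and the nonincreasing chain upgrades the right-hand side at every stage, so the same step can be iterated down to $v^\omega < \ell_{i-1}^\omega$, contradicting Theorem~\ref{theo:suffix}(2) applied to $\ell_{i-1}$. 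Each application of the lemma there compares a word of the form $(\mbox{something})\cdot\ell_k$ against $\ell_k^\omega$ itself, which is precisely the situation Lemma~\ref{lem:uuvv} handles with no extra hypotheses; your left-to-right peeling compares against the fixed target $\ell_n^\omega$ and therefore falls outside the lemma's reach. Replacing your inner induction by this right-to-left contradiction argument closes the gap.
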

\begin{proof}
Let us consider a generalized lexicographical order $<$ on $A^\infty$.

To prove the existence of such a nonincreasing factorization we follow the proof of~\cite[Corollary 2.2]{R2}.
Let $w\in A^+$ (if $w=1$ there is nothing to prove).
We define $z$ as the shortest among all nontrivial suffixes $s$ of $w$ such that $s^\omega$ is minimum.
By Theorem~\ref{theo:suffix}, $z$ is a generalized Lyndon word.
If $w=z$, we have found our factorization.
Otherwise, we can write $w = uz$ and, by induction on the length of $u$, we may assume that $u = \ell_1 \ell_2 \cdots \ell_n$, where the $\ell_i$ are generalized Lyndon words with $\ell_1^\omega \geq \ell_2^\omega \geq \ldots \geq \ell_n^\omega$.
Moreover, we have $\ell_n^\omega \geq z^\omega$.
Indeed, by construction of $z$ we have $(\ell_n z)^\omega \geq z^\omega$, and thus, using Lemma~\ref{lem:uuvv}, $\ell_n^\omega \geq z^\omega$.

Let us now prove the uniqueness of this factorization.
Suppose that we have $w = \ell_1 \ell_2 \cdots \ell_n$, where the $\ell_i$ are generalized Lyndon words with$\ell_1^\omega \geq \ell_2^\omega \geq \ldots \geq \ell_n^\omega$.
Let us show that $\ell_n$ is uniquely determined by the following condition: it is the shortest among all nontrivial suffixes $s$ of $w$ such that $s^\omega$ is minimum.
To prove this it is enough to show that if $s$ is a nontrivial proper suffix of $\ell_n$, then $\ell_n^\omega < s^\omega$; and if $s$ is a suffix of $w$ longer that $\ell_n$, then $\ell_n^\omega \leq s^\omega$. 
The first inequality follows from point 2 of Theorem \ref{theo:suffix} and the fact that $\ell_n$ is a generalized Lyndon word.
Suppose now that the second one is not true.
Thus there exists some $i$, with $2 \le i \le n$, and some factorization $\ell_{i-1} = uv$ with $v$ nonempty, such that $s = v \ell_i \cdots \ell_n$, and 
$$ (v \ell_i \cdots \ell_{n-1} \ell_n)^\omega <\ell_n^\omega. $$
From this last inequality and from Lemma~\ref{lem:uuvv} we deduce that 
$(v \ell_i \cdots \ell_{n-1})^\omega < \ell_n^\omega$.
Since $\ell_n^\omega \leq \ell_{n-1}^\omega$, we thus have
$$ (v \ell_i \cdots \ell_{n-1})^\omega < \ell_{n-1}^\omega.$$
Continuing recursively, we find that $(v \ell_i)^\omega < \ell_i^\omega$, therefore $(v\ell_i)^\omega < \ell_i^\omega$ and, since $\ell_i^\omega \leq \ell_{i-1}^\omega$, that $v^\omega < l_{i-1}^\omega$.
This gives us a contradiction to Theorem~\ref{theo:suffix}, since $\ell_{i-1} = uv$ is a generalized Lyndon word.
Thus $\ell_n$ is uniquely determined and, by proceeding recursively we prove the uniqueness of the factorization.
\end{proof}

From the proof of the previous theorem we obtain the two following results.

\begin{corollary}
\label{cor:factorization2}
Let $w = \ell_1 \ell_2 \cdots \ell_n$, with $\ell_i$ generalized Lyndon words such that $\ell_1^\omega \geq \ell_2^\omega \geq \ldots \geq \ell_n^\omega$.
Then $\ell_n$ is the shortest among all nontrivial suffixes $s$ of $w$ such that $s^\omega$ is minimum.
\end{corollary}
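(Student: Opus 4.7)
The plan is to observe that this corollary is essentially an extraction of the content established in the uniqueness part of the proof of Theorem~\ref{theo:factorization}. I would present it as a self-contained argument so that the corollary does not merely point back to the proof above, but makes the dependence on Theorem~\ref{theo:suffix} and Lemma~\ref{lem:uuvv} explicit.

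First, I would prove that $\ell_n^\omega < s^\omega$ for every nontrivial proper suffix $s$ of $\ell_n$. Since $\ell_n$ is a generalized Lyndon word, this is immediate from point 2 of Theorem~\ref{theo:suffix}. This handles all suffixes of $w$ that are strictly shorter than $\ell_n$, and shows that no such suffix can be a candidate for minimising $s^\omega$.

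Next, I would prove that $\ell_n^\omega \leq s^\omega$ for every suffix $s$ of $w$ strictly longer than $\ell_n$. Any such $s$ can be written uniquely as $s = v\ell_i\ell_{i+1}\cdots\ell_n$ for some $2 \leq i \leq n$ and some nonempty suffix $v$ of $\ell_{i-1}$, say $\ell_{i-1} = uv$. Arguing by contradiction, assume $s^\omega < \ell_n^\omega$. Since $\ell_n^\omega \leq \ell_{n-1}^\omega$, Lemma~\ref{lem:uuvv} (applied to the pair of words $v\ell_i\cdots\ell_{n-1}$ and $\ell_n$) lets us drop the final factor and obtain $(v\ell_i\cdots\ell_{n-1})^\omega < \ell_{n-1}^\omega$. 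Iterating this peeling with each inequality $\ell_{j+1}^\omega \leq \ell_j^\omega$, we eventually reach $v^\omega < \ell_{i-1}^\omega$. But Theorem~\ref{theo:suffix} (point 2) applied to the generalized Lyndon word $\ell_{i-1} = uv$ gives $\ell_{i-1}^\omega < v^\omega$, a contradiction.

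Combining the two steps, $\ell_n^\omega$ is a minimum among $\{s^\omega : s \text{ nontrivial suffix of } w\}$ (it is attained by $\ell_n$ itself and no longer suffix goes strictly below it), and $\ell_n$ is the \emph{shortest} suffix realising this minimum (any strictly shorter nontrivial suffix lies inside $\ell_n$ and thus has strictly larger $s^\omega$). I expect no real obstacle; the only care-demanding step is the inductive peeling, where one must invoke Lemma~\ref{lem:uuvv} in the correct direction at each stage and use the non-increasing chain $\ell_n^\omega \leq \ell_{n-1}^\omega \leq \cdots \leq \ell_i^\omega$ to keep the strict inequality propagating.
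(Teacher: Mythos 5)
Your argument is correct and is essentially the same as the paper's: the corollary is extracted verbatim from the uniqueness part of the proof of Theorem~\ref{theo:factorization}, which handles short suffixes via point 2 of Theorem~\ref{theo:suffix} and long suffixes by the same contradiction-and-peeling argument with Lemma~\ref{lem:uuvv}. The only (shared, harmless) imprecision is the boundary case $u=1$, where $v=\ell_{i-1}$ and the contradiction $v^\omega<\ell_{i-1}^\omega$ is immediate rather than coming from Theorem~\ref{theo:suffix}.
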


\begin{corollary}
\label{cor:factorization3}
With the same hypothesis as in Corollary~\ref{cor:factorization2}, we have that $\ell_n$ is the longest suffix of $w$ which is a generalized Lyndon word. 
\end{corollary}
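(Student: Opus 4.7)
The plan is to combine Corollary~\ref{cor:factorization2} with part 2 of Theorem~\ref{theo:suffix}: the former pins down $\ell_n^\omega$ as the minimum of $t^\omega$ taken over all nontrivial suffixes $t$ of $w$, while the latter asserts that any generalized Lyndon word $s$ satisfies $s^\omega < v^\omega$ for every proper nontrivial suffix $v$. Since $\ell_n$ is itself a generalized Lyndon word and a suffix of $w$, it is already a ``Lyndon suffix'' of $w$; it therefore suffices to rule out the existence of a strictly longer one.

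Concretely, I would take an arbitrary suffix $s$ of $w$ with $|s| > |\ell_n|$. Because $w$ ends in $\ell_n$, so does $s$, and hence I can write $s = x\ell_n$ for some nonempty word $x$; this is a nontrivial factorization of $s$. Suppose for contradiction that $s$ is a generalized Lyndon word. Then part 2 of Theorem~\ref{theo:suffix}, applied to the factorization $s = x \cdot \ell_n$, gives $s^\omega < \ell_n^\omega$. But $s$ is also a nontrivial suffix of $w$, and Corollary~\ref{cor:factorization2} tells us that $\ell_n^\omega$ is the minimum of $t^\omega$ over all nontrivial suffixes $t$ of $w$, so $s^\omega \geq \ell_n^\omega$. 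This contradicts the previous strict inequality, so $s$ is not a generalized Lyndon word.

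I do not anticipate any real obstacle here: the statement follows in a few lines from the two inputs just cited, and the only fact to verify along the way is the elementary observation that any suffix of $w$ strictly longer than $\ell_n$ contains $\ell_n$ as a proper suffix, which is immediate from $w$ ending in $\ell_n$.
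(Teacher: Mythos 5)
Your proof is correct and follows exactly the paper's own argument: apply point 2 of Theorem~\ref{theo:suffix} to a hypothetical longer Lyndon suffix $s = x\ell_n$ to get $s^\omega < \ell_n^\omega$, contradicting the minimality of $\ell_n^\omega$ from Corollary~\ref{cor:factorization2}. Nothing further is needed.
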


Note that the this result is known for classical Lyndon words (see~\cite[Lemma 7.14 (ii)]{R1} and~\cite{D}).

\begin{proof}[Proof of Corollary~\ref{cor:factorization3}]
Indeed, if  there exists a suffix $s$ longer than $\ell_n$ which is a generalized Lyndon word, then, since $s$ has $\ell_n$ as a proper suffix, we would have $s^\omega < \ell_n^\omega$ by point 2 of Theorem~\ref{theo:suffix}, contradicting Corollary~\ref{cor:factorization2}.
\end{proof}

\begin{example}
\label{ex:pfact}
Let us consider the word $w = aabaabaabb$ with the order defined in Example~\ref{ex:p}.
The unique nonincreasing factorization of $w$ in generalized Lyndon words is $w = (a)(aba)(aba)(abb)$.
\end{example}

Note that every factor $\ell$ of the factorization in Lyndon words is \emph{primitive}, i.e., if $\ell = w^r$ with $w$ a finite word and $r$ an integer, then $r=1$ and $\ell=w$.

\section{Classical Lyndon words}
\label{sec:classical}

In this section, we take as generalized lexicographical order the usual lexicographical order $<_{lex}$, simply denoted by $<$.
Clearly, a generalized Lyndon word for this order is a usual one, since for two finite words of the same length, one has $u < v$ if and only if $u^\omega < v^\omega$. (see \cite[Theorem 8]{BMRRS}).

\subsection{Factorization into Lyndon words}
\label{sec:classical-factorization}

The nonincreasing factorization of a word into Lyndon words, as in Theorem~\ref{theo:factorization}, is the usual nonincreasing factorization into Lyndon words (see, for instance,~\cite{L}).

While at the end of Section~\ref{sec:words} we gave two characterizations of the last element of the factorization, here we focus on the first factor.
This result is motivated by point 1 of Theorem~\ref{theo:suffix}: the fact that a word $w$ is not a Lyndon word implies the existence of a prefix $u$
such that $u^\omega \geq v^\omega$, where $v$ is the corresponding suffix.
If one chooses the shortest prefix satisfying this property, this turns out to be the first factor in the Lyndon factorization.
In the same vein, it is motivated by Ufnarovskij's Theorem (Corollary~\ref{cor:U} below).

\begin{theorem}
\label{theo:shortest}
Let $w = \ell_1 \ell_2 \cdots \ell_n$ be the nonincreasing factorization into Lyndon words of a finite nonempty word $w$.
\begin{enumerate}
 \item[\rm 1.] The word $\ell_1$ is the shortest nontrivial prefix $p$ of $w$ such that, when writing $w=ps$, one has either $s=1$ or $p^\omega \ge s^\omega$.
 \item[\rm 2.] The word $\ell_1$ is the shortest nontrivial prefix $p$ of $w$ such that $p^\omega \geq w^\omega$.
\end{enumerate}
\end{theorem}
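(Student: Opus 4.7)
My plan is to first note that the two characterizations are in fact equivalent. For any nontrivial factorization $w=ps$ with $s\neq 1$, the infinite words $p^\omega$ and $w^\omega=(ps)^\omega$ share the prefix $p$; in the classical (shift-invariant) order the comparison thus reduces to $p^\omega$ versus $(sp)^\omega$, which by the contrapositive of Lemma~\ref{lem:uuvv}, equivalence (1)$\Leftrightarrow$(3) applied with $u=p,\,v=s$, is the same as $p^\omega$ versus $s^\omega$. The case $s=1$ is trivially the same on both sides. Hence $p^\omega\geq w^\omega$ if and only if $p^\omega\geq s^\omega$, and it suffices to prove one of the two parts.

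For the fact that $\ell_1^\omega\geq w^\omega$, I would prove by induction on $n$ the slightly more general statement that, for arbitrary finite words with $u_1^\omega\geq u_2^\omega\geq\cdots\geq u_n^\omega$, one has $(u_1\cdots u_n)^\omega\leq u_1^\omega$. The induction hypothesis applied to $u_2,\ldots,u_n$ reduces the inductive step to the two-word case $v^\omega\leq u^\omega \implies (uv)^\omega\leq u^\omega$, which in turn follows from Lemma~\ref{lem:uuvv}: the equality case gives that $u$ and $v$ are powers of a common word, and in the strict case equivalence (1)$\Leftrightarrow$(2) gives $(vu)^\omega<u^\omega$, which transfers to $(uv)^\omega<u^\omega$ after stripping the common prefix $u$ via shift-invariance of the classical order.

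For the direction that no shorter nontrivial prefix works, consider $p$ a proper nontrivial prefix of $\ell_1$ and write $\ell_1=pq$ with $q\neq 1$. Theorem~\ref{theo:suffix}(1) applied to the Lyndon word $\ell_1$ yields $p^\omega<q^\omega$. I would then invoke Lemma~\ref{lem:p<s}(ii) to deduce $p^\omega<(qy)^\omega$ for every word $y$; taking $y=\ell_2\cdots\ell_n$ (or $y=1$ when $n=1$) yields $p^\omega<s^\omega$, where $s$ is the suffix of $w$ complementary to $p$, and the first paragraph converts this into $p^\omega<w^\omega$, so $p$ fails the property.

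The key nontrivial step is the verification of the hypothesis of Lemma~\ref{lem:p<s}(ii), namely that $q$ is not a fractional power of $p$. If it were, then either $q=p^k p_1$ with $k\geq 1$, in which case $\ell_1=p^{k+1}p_1$ has $q$ as both a proper prefix and a proper suffix, or $q$ is a proper prefix of $p$ (the case $k=0$), in which case $q$ is still a proper prefix of $\ell_1$ and tautologically a suffix of $\ell_1=pq$. In both cases $q$ becomes a nontrivial proper border of the Lyndon word $\ell_1$, contradicting the classical fact (an easy consequence of $\ell_1<u$ for every proper nontrivial suffix $u$) that Lyndon words are unbordered. This unbordedness argument is the main obstacle; once it is in place, the rest is straightforward manipulation with Lemma~\ref{lem:uuvv} and the shift-invariance of the classical lexicographic order.
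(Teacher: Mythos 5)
Your proposal is correct and takes essentially the same route as the paper: your equivalence of the two parts is the content of Lemma~\ref{lem:uuvv1}, your induction giving $\ell_1^\omega\ge(\ell_2\cdots\ell_n)^\omega$ is Corollary~\ref{cor:l_1}, and your lower bound via Theorem~\ref{theo:suffix}, the unborderedness of $\ell_1$, and Lemma~\ref{lem:p<s}(ii) is exactly the paper's argument (you merely spell out inline the unborderedness fact that the paper cites as well known).
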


In order to prove Theorem~\ref{theo:shortest} we need a preliminary result which refines Lemma~\ref{lem:uuvv} in the case of the usual lexicographical order.

Note that, for any infinite words $s,t$ such that $s < t$, with $<$ the classical order, and for any finite word $w$, one has $ws < wt$.

\begin{lemma}
\label{lem:uuvv1}
Let $u,v$ be two nonempty words.
Then each of the two following conditions is equivalent to each of the four conditions in Lemma~\ref{lem:uuvv}:
\begin{enumerate}
 \item[\rm (5)] $u^\omega < (uv)^\omega$;
 \item[\rm (6)] $(vu)^\omega < v^\omega$.
\end{enumerate}
\end{lemma}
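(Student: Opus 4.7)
The plan is to use the left-invariance property of the classical lexicographic order recorded just before the lemma: for a finite word $w$ and infinite words $s,t$, one has $ws < wt$ if and only if $s < t$. Note that this cancellation on the left is precisely what fails for a general generalized order (as witnessed by Example \ref{ex:prefix}), so this is where the hypothesis that we are in the classical setting is really used.

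For condition (5), I would decompose $(uv)^\omega = u\cdot (vu)^\omega$ and $u^\omega = u \cdot u^\omega$. Applying the left-invariance with common prefix $u$ then yields
\[
 u^\omega < (uv)^\omega \ \Longleftrightarrow\ u \cdot u^\omega < u \cdot (vu)^\omega \ \Longleftrightarrow\ u^\omega < (vu)^\omega,
\]
which is precisely condition (3) of Lemma \ref{lem:uuvv}.

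For condition (6), I would similarly decompose $(vu)^\omega = v \cdot (uv)^\omega$ and $v^\omega = v \cdot v^\omega$, and apply left-invariance with common prefix $v$ to get
\[
 (vu)^\omega < v^\omega \ \Longleftrightarrow\ v \cdot (uv)^\omega < v \cdot v^\omega \ \Longleftrightarrow\ (uv)^\omega < v^\omega,
\]
which is precisely condition (2) of Lemma \ref{lem:uuvv}.

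Since (2) and (3) are equivalent to (1) and (4) by Lemma \ref{lem:uuvv}, both (5) and (6) join the chain of equivalences, completing the proof. I do not anticipate a genuine obstacle; the only subtlety is to remember to invoke the classical order explicitly (so that left-cancellation is available in both directions), and to observe that the assumption $u^\omega \neq v^\omega$ carried through Lemma \ref{lem:uuvv} is compatible with the strict inequalities in (5) and (6) since the left-cancellation equivalence is one of strict order.
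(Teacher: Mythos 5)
Your proposal is correct and is essentially identical to the paper's proof: both establish (5) $\Leftrightarrow$ (3) and (6) $\Leftrightarrow$ (2) by writing $(uv)^\omega = u(vu)^\omega$ and $(vu)^\omega = v(uv)^\omega$ and using left-invariance of the classical lexicographic order under prepending a finite word. Your explicit remark on why this cancellation is specific to the classical order is a nice touch but does not change the argument.
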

\begin{proof}
Condition (3) in Lemma~\ref{lem:uuvv} is equivalent to condition (5): indeed $ u^\omega < (vu)^\omega \Leftrightarrow uu^\omega < u(vu)^\omega \Leftrightarrow u^\omega < (uv)^\omega.$
Similarly, condition (2) is equivalent to condition (6): indeed,
$ (uv)^\omega < v^\omega \Leftrightarrow  v(uv)^\omega < vv^\omega \Leftrightarrow (vu)^\omega < v^\omega.$
\end{proof}

Note that the previous lemma implies that if $u^\omega < v^\omega$, then $$u^\omega < (uv)^\omega < (vu)^\omega < v^\omega,$$ a result proved by Bergman in~\cite[ Lemma 5.1]{B} (see also~\cite[p.34 and pp.101--102]{U}).

The following result is~\cite[Theorem 2, p.35]{U}.

\begin{corollary}[Ufnarovskij]
\label{cor:U}
A word $w$ is a Lyndon word if and only if for any nontrivial factorization $w=ps$, one has $p^\omega < w^\omega$.
\end{corollary}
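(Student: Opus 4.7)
The plan is to deduce Corollary~\ref{cor:U} directly from part 2 of Theorem~\ref{theo:shortest}, as anticipated in the introduction of Section~\ref{sec:classical}. The bridge between the two statements is the observation that $w$ is a Lyndon word if and only if its (unique, by Theorem~\ref{theo:factorization}) nonincreasing Lyndon factorization consists of a single factor $\ell_1 = w$. Indeed, if $w$ is Lyndon, then $w$ itself is a valid such factorization, so by uniqueness $\ell_1 = w$; conversely, if $\ell_1 = w$, then $w$ equals a factor of its Lyndon factorization and is therefore a Lyndon word.

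Next I would translate Theorem~\ref{theo:shortest}(2) into the language of Corollary~\ref{cor:U}. A nontrivial factorization $w = ps$ is the same data as a proper nontrivial prefix $p$ of $w$ (proper because $s \neq 1$, nontrivial because $p \neq 1$). By Theorem~\ref{theo:shortest}(2), $\ell_1$ is the shortest nontrivial prefix $p$ of $w$ satisfying $p^\omega \ge w^\omega$, and $w$ itself is a nontrivial prefix trivially satisfying $w^\omega \ge w^\omega$. Hence $\ell_1 = w$ if and only if no proper nontrivial prefix $p$ satisfies $p^\omega \ge w^\omega$, i.e., every nontrivial factorization $w = ps$ satisfies $p^\omega < w^\omega$.

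Chaining the two equivalences $w$ Lyndon $\Leftrightarrow \ell_1 = w \Leftrightarrow$ every nontrivial factorization $w=ps$ has $p^\omega < w^\omega$ yields exactly Corollary~\ref{cor:U}. Since Theorem~\ref{theo:shortest} does all the substantive work, there is no real obstacle; the only care required is the short bookkeeping distinguishing ``nontrivial prefix'' (allowing $p = w$) from ``proper nontrivial prefix'' (forcing $s$ to be nonempty), which is what makes $w$ itself serve as the trivially valid choice in the minimization defining $\ell_1$.
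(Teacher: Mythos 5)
Your proof is correct, and it is not circular: the paper's proof of Theorem~\ref{theo:shortest} rests only on Theorem~\ref{theo:suffix}, Corollary~\ref{cor:l_1}, the unborderedness of Lyndon words and Lemma~\ref{lem:p<s}, never on Corollary~\ref{cor:U}, so you may legitimately quote part~2 of Theorem~\ref{theo:shortest} even though the corollary is printed before that theorem's proof. However, your route differs from the one the paper actually takes. The paper's proof is a two-line local argument: by point~1 of Theorem~\ref{theo:suffix}, $w$ is Lyndon if and only if $p^\omega < s^\omega$ for every nontrivial factorization $w=ps$, and Lemma~\ref{lem:uuvv1} (condition~(5)) converts each such inequality into $p^\omega < (ps)^\omega = w^\omega$, factorization by factorization. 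You instead argue globally: $w$ is Lyndon iff its (unique, by Theorem~\ref{theo:factorization}) nonincreasing factorization is the single factor $w$, and then read off from the minimality statement in Theorem~\ref{theo:shortest}(2) that this happens exactly when no proper nontrivial prefix $p$ satisfies $p^\omega \ge w^\omega$. Your bookkeeping of ``nontrivial prefix'' versus ``proper nontrivial prefix'' is handled correctly. The trade-off is that your derivation imports the full weight of the factorization theorem and of Theorem~\ref{theo:shortest} (including the unborderedness argument) to prove a statement that follows elementarily from the suffix characterization; what it buys is a transparent conceptual link between ``$w$ is Lyndon'' and ``$w$ is the sole factor of its own Lyndon factorization,'' and it is in fact the derivation that the paper's introduction announces (``From Theorem~\ref{theo:shortest} we deduce a new proof of a result from Ufnarovskij''), even though the proof printed in Section~\ref{sec:classical} takes the shorter direct route.
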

\begin{proof}
It follows from point 1 of Theorem~\ref{theo:suffix} and from Lemma~\ref{lem:uuvv1}.
\end{proof}

\begin{example}
The word $w = aabab$ is a Lyndon word.
We have $a^\omega = (aa)^\omega < (aaba)^\omega < (aab)^\omega < w^\omega$.
\end{example}

\begin{corollary}
\label{cor:l_1}
If $\ell_1, \ell_2, \ldots, \ell_n$, with $n \ge 2$, are Lyndon words such that $\ell_1^\omega \ge \ell_2^\omega \ge \cdots \geq \ell_n^\omega$, then $\ell_1^\omega \ge (\ell_{2} \cdots \ell_n)^\omega$.
\end{corollary}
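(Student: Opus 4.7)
The plan is to argue by induction on the number $n \ge 2$ of factors. The base case $n = 2$ is immediate: the claim $\ell_1^\omega \ge (\ell_2)^\omega$ is exactly the hypothesis $\ell_1^\omega \ge \ell_2^\omega$.

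For the inductive step, I first apply the induction hypothesis to the truncated sequence $\ell_2, \ldots, \ell_n$, which is still a nonincreasing sequence of $n-1$ Lyndon words, to obtain
\[
\ell_2^\omega \ge (\ell_3 \cdots \ell_n)^\omega.
\]
I then want to absorb an initial $\ell_2$ on the right, i.e., to deduce $\ell_2^\omega \ge (\ell_2\ell_3\cdots \ell_n)^\omega$. Setting $u = \ell_2$ and $v = \ell_3 \cdots \ell_n$, the equivalence of conditions (1) and (5) in Lemma~\ref{lem:uuvv1} asserts that $u^\omega < v^\omega$ if and only if $u^\omega < (uv)^\omega$. Taking contrapositives gives the analogous statement for strict $>$, and the equality case $u^\omega = v^\omega$ is handled by the standard fact recalled in Section~\ref{sec:def}: this forces $u$ and $v$ to be powers of a common word, whence $(uv)^\omega = u^\omega$ as well. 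Combining these cases yields $\ell_2^\omega \ge (\ell_2\ell_3\cdots \ell_n)^\omega$, and composing with $\ell_1^\omega \ge \ell_2^\omega$ by transitivity closes the induction.

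I do not expect a substantive obstacle: the argument is essentially a one-step contraction on the right-hand infinite power, mediated by Lemma~\ref{lem:uuvv1}, and the only care needed is to handle the equality case separately since Lemma~\ref{lem:uuvv1} is stated with strict inequality. I deliberately avoid appealing to Theorem~\ref{theo:shortest}---which would give a one-line proof, observing that $\ell_2$ is the first Lyndon factor of $\ell_2 \cdots \ell_n$---since Corollary~\ref{cor:l_1} appears to be intended as a preliminary result for the paper's eventual proof of Theorem~\ref{theo:shortest}, making that route circular.
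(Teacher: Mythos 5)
Your proof is correct and follows essentially the same route as the paper's: induction on $n$, applying the induction hypothesis to $\ell_2,\ldots,\ell_n$, then using the equivalence of conditions (1) and (5) in Lemma~\ref{lem:uuvv1} to absorb the leading $\ell_2$, and finally transitivity with $\ell_1^\omega \ge \ell_2^\omega$. Your explicit treatment of the equality case is a minor extra care the paper leaves implicit, and your remark about avoiding Theorem~\ref{theo:shortest} to prevent circularity is well taken.
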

\begin{proof}
The case $n=2$, it is trivial.
Let consider the case $n \ge 3$.
By induction hypothesis we have $\ell_2^\omega \ge (\ell_3 \cdots \ell_n)^\omega$.
From Lemma~\ref{lem:uuvv1} it follows that $\ell_{2}^\omega \ge (\ell_{2}\cdots \ell_n)^\omega$.
Hence, $\ell_1^\omega \ge (\ell_{2} \cdots \ell_n)^\omega$.
\end{proof}

It is well-known that all (classical) Lyndon words are unbordered (see, for instance,~\cite{CK}).

\begin{proof}[Proof of Theorem~\ref{theo:shortest}]
Let us prove the first assertion.
When $n=1$, then $w = \ell_1$ is a Lyndon word and the result is true by point 1 of Theorem~\ref{theo:suffix}.

Suppose now that $n \geq 2$.
Then, by Corollary~\ref{cor:l_1}, we have $\ell_1^\omega \ge (\ell_{2} \cdots \ell_n)^\omega$.
Let $p$ be a nontrivial prefix of $w$ shorter then $\ell_1$.
Thus, we have a nontrivial factorization $\ell_1 = pq$ for some $q \ne 1$.
By Theorem~\ref{theo:suffix}, we know that $p^\omega < q^\omega$.
Since $\ell_1$ is unbordered, $q$ cannot be a fractional power of $p$.
Thus, by point (ii) of Lemma~\ref{lem:p<s}, one has $p^\omega < (q \ell_2 \cdots \ell_n)^\omega$, which prove the first part of the theorem.

The second assertion just follows from the first one.
Indeed, using Lemma~\ref{lem:uuvv1}, we have that if $s \ne 1$, then $p^\omega \ge s^\omega$ is equivalent to $p^\omega \ge (ps)^\omega$.
\end{proof}

\begin{example}
\label{ex:somegalois}
Let $w = ababaab$.
Its nonincreasing factorization into Lyndon words is $w = (ab)(ab)(aab)$.
One can check that $(ab)^\omega > w^\omega > (abaab)^\omega$ while $a^\omega < w^\omega < (babaab)^\omega$.
\end{example}

\section{Galois words}
\label{sec:galois}
In this section we consider a particular generalized lexicographical order.

\begin{definition}
\label{def:alt}
Let $<_1$ be an order on $A$.
The {\em alternating lexicographical order} $<_{alt}$ with respect to $<_1$ is the generalized lexicographical order defined by the sequence $(<_n)_{n \ge 1}$ with $<_n = <_1$ if $n$ is odd, and $<_n = \tilde<_1$ if $n$ is even.
\end{definition}

\begin{example}
\label{ex:alt}
Let us consider $<_1$ as the usual order on $\{a,b\}$.
Then one has $(ab)^\omega <_{alt} a^\omega <_{alt} b^\omega <_{alt} (ba)^\omega$.
\end{example}

This order is relevant when one orders real numbers through their continued fractions, see for example \cite[p.1-2]{R2}.
%We say that a word is {\em even} (resp. {\em odd}) if its length is.

The terminology in the following definition is justified in~\cite[p.2]{R2}.

\begin{definition}
\label{def:galois}
A {\em Galois word} is a generalized Lyndon word for an alternating lexicographical order.
\end{definition}

\begin{example}
Let us consider $<_1$ the usual order on $\{a,b,c\}$.
The following are Galois words: $b$, $ac$, $bc$, $aba$, $abb$, $abaa$, $acab$.
\end{example}

\subsection{Characterization of Galois words}
\label{sec:chargalois}
Similarly to what we saw in Section~\ref{sec:classical-factorization} for the classical order, for any infinite words $s,t$ such that $s <_{alt} t$, and any finite word $w$, one has $ws <_{alt} wt$ if $|w|$ is even, and $ws >_{alt} wt$ if $|w|$ is odd.

Symmetrically, when $ws <_{alt} wt$ one has $s <_{alt} t$ if $|w|$ is even and $s >_{alt} t$ if $|w|$ is odd.

\begin{example}
Let us consider the order of Example~\ref{ex:alt}.
One has $(ab) a^\omega <_{alt} (ab) b^\omega$ and $b a^\omega >_{alt} b b^\omega$.
\end{example}

Using the previous observation we can prove the next lemma using the same techniques as in Lemma~\ref{lem:uuvv1}.

\begin{lemma}
\label{lem:uuvv2}
Let $u,v$ be nonempty words.
Then each of the two following conditions is equivalent to each of the four conditions in Lemma~\ref{lem:uuvv} when considering the order $<_{alt}$:
\begin{itemize}
 \item[\rm (5)] $u^\omega <_{alt} (uv)^\omega$ if $|u|$ is even and $u^\omega >_{alt} (uv)^\omega$ if $|u|$ is odd;
 \item[\rm (6)] $(vu)^\omega <_{alt} v^\omega$ if $|v|$ is even and $(vu)^\omega >_{alt} v^\omega$ if $|v|$ is odd.
\end{itemize}
\end{lemma}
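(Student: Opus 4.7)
The plan is to imitate the proof of Lemma~\ref{lem:uuvv1} by prepending $u$ (respectively $v$) to both sides of the inequalities in conditions (3) and (2) of Lemma~\ref{lem:uuvv}, while carefully tracking how the alternating order reacts to prepending. The combinatorial content has in fact already been isolated: the two observations stated just before this lemma say that prepending a finite word $w$ to two infinite words preserves $<_{alt}$ when $|w|$ is even and reverses it when $|w|$ is odd, and that the symmetric "removal of a common prefix" form holds as well. Since Lemma~\ref{lem:uuvv} is stated for an arbitrary generalized lexicographical order, it already applies to $<_{alt}$ and gives the equivalence of (1)--(4); it suffices to relate (5) to (3) and (6) to (2).

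For (5) $\Leftrightarrow$ (3): I would start from $u^\omega <_{alt} (vu)^\omega$ and prepend $u$ to both sides, using the identities $u\cdot u^\omega = u^\omega$ and $u\cdot (vu)^\omega = (uv)^\omega$. If $|u|$ is even, this yields $u^\omega <_{alt} (uv)^\omega$; if $|u|$ is odd, this yields $u^\omega >_{alt} (uv)^\omega$. Either way, this is precisely condition (5). The reverse implication comes for free from the symmetric version of the prepending observation (removing the initial $u$ flips the inequality exactly when $|u|$ is odd).

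For (6) $\Leftrightarrow$ (2): I would apply the same trick, this time prepending $v$ to both sides of $(uv)^\omega <_{alt} v^\omega$, using $v\cdot(uv)^\omega = (vu)^\omega$ and $v\cdot v^\omega = v^\omega$. The parity of $|v|$ then dictates whether the resulting inequality is $(vu)^\omega <_{alt} v^\omega$ or $(vu)^\omega >_{alt} v^\omega$, both cases being recorded in (6), and again the reverse direction is available from the symmetric observation.

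No serious obstacle is anticipated: the heavy lifting has already been done by Lemma~\ref{lem:uuvv} (which produces the four-way equivalence for any generalized order) and by the two pre-stated facts about how $<_{alt}$ behaves under concatenation by a finite word. The only thing to watch is the correct matching of parity cases in (5) and (6) with the even/odd cases of the prepending rule — essentially bookkeeping.
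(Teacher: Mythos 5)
Your proposal is correct and follows essentially the same route as the paper: the paper's proof likewise invokes the equivalence of (1)--(4) from Lemma~\ref{lem:uuvv} and then relates (5) to (3) by prepending $u$ (using $u\cdot u^\omega = u^\omega$ and $u\cdot (vu)^\omega = (uv)^\omega$) and (6) to (2) by prepending $v$, with the parity of the prepended word governing whether the inequality is preserved or reversed. The bookkeeping of the even/odd cases in your sketch matches the paper's exactly.
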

\begin{proof}
Let us first suppose that $|u|$ is even.
By the remark at the beginning of the section, one has $u^\omega <_{alt} (vu)^\omega \Leftrightarrow u^\omega = uu^\omega <_{alt} u(vu)^\omega = (uv)^\omega$.
Using the same remark we have, in the case $|u|$ is odd, $u^\omega <_{alt} (vu)^\omega \Leftrightarrow u^\omega = uu^\omega >_{alt} u(vu)^\omega = (uv)^\omega$.
Thus condition (3) of Lemma~\ref{lem:uuvv} is equivalent to condition (5).

Similarly, condition (2) of Lemma~\ref{lem:uuvv} is equivalent to condition (6).
Indeed, whenever $|v|$ is even one has $(uv)^\omega <_{alt} v^\omega \Leftrightarrow (vu)^\omega = v(uv)^\omega <_{alt} vv^\omega = v^\omega$, and whenever $|v|$ is odd one has $(uv)^\omega <_{alt} v^\omega \Leftrightarrow (vu)^\omega = v(uv)^\omega >_{alt} vv^\omega = v^\omega$.
\end{proof}

The following characterization of Galois words can be seen as a generalization of Ufnarovskij's Theorem (Corollary~\ref{cor:U}).

\begin{theorem}
\label{theo:p_even_or_odd}
A word $w$ is a Galois word if and only if for any nontrivial factorization $w = ps$, one has the following condition: 
$p^\omega <_{alt} w^\omega$ if $|p|$ is even and $p^\omega >_{alt} w^\omega$ if $|p|$ is odd.
\end{theorem}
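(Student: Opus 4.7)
The plan is to derive this theorem as a direct combination of two earlier results: point~1 of Theorem~\ref{theo:suffix} (which characterizes generalized Lyndon words via the prefix/suffix inequality $u^\omega < v^\omega$) and condition~(5) of Lemma~\ref{lem:uuvv2} (which translates such an inequality into a comparison between $u^\omega$ and $(uv)^\omega$, with a sign flip depending on the parity of~$|u|$).

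First, I would specialize Theorem~\ref{theo:suffix}(1) to the alternating order: a word $w$ is a Galois word if and only if for every nontrivial factorization $w = ps$ one has $p^\omega <_{alt} s^\omega$. This reduces the problem to showing that, under the hypothesis $w = ps$ with $p,s$ nonempty, the inequality $p^\omega <_{alt} s^\omega$ is equivalent to the conditional statement appearing in the theorem (namely $p^\omega <_{alt} w^\omega$ when $|p|$ is even and $p^\omega >_{alt} w^\omega$ when $|p|$ is odd).

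Next, I would apply Lemma~\ref{lem:uuvv2} with $u := p$ and $v := s$. Condition~(1) of Lemma~\ref{lem:uuvv}, rephrased for the alternating order, is precisely $p^\omega <_{alt} s^\omega$, and the lemma asserts its equivalence with condition~(5): $p^\omega <_{alt} (ps)^\omega$ if $|p|$ is even and $p^\omega >_{alt} (ps)^\omega$ if $|p|$ is odd. Since $ps = w$, this is exactly the conditional statement in the theorem. Combining this equivalence with the characterization from Theorem~\ref{theo:suffix}(1) yields the claim, running over all nontrivial factorizations $w = ps$.

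There is essentially no real obstacle here, as both ingredients have already been established; the only point requiring care is matching the parity condition in Lemma~\ref{lem:uuvv2}(5) (which depends on $|u|$) with the parity of $|p|$ in the statement, which is immediate once the substitution $u = p$, $v = s$ is made. This gives a short two-line proof that parallels the derivation of Ufnarovskij's Theorem (Corollary~\ref{cor:U}) from Theorem~\ref{theo:suffix} and Lemma~\ref{lem:uuvv1} in the classical case, justifying the claim in the paper that Theorem~\ref{theo:p_even_or_odd} is the natural generalization of Ufnarovskij's result to Galois words.
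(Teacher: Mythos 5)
Your proposal is correct and is exactly the paper's own argument: the paper likewise derives Theorem~\ref{theo:p_even_or_odd} immediately from point~1 of Theorem~\ref{theo:suffix} combined with condition~(5) of Lemma~\ref{lem:uuvv2}, taking $u=p$ and $v=s$. The parity bookkeeping you flag is handled the same way, so there is nothing to add.
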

\begin{proof}
The result immediately follow from point 1 in Theorem~\ref{theo:suffix} and from Lemma~\ref{lem:uuvv2}.
\end{proof}

\subsection{Factorization into Galois words}

Suppose that $w = g_1 g_2 \cdots g_n$ is the nonincreasing factorization of $w$ in Galois words.
We call {\em multiplicity} of $g_1$ the number
$m = \mbox{Card}\{i \mid g_i = g_1\}$.
In other words $w = g_1^mg_{i+1} \cdots g_n$, with $g_1^\omega >_{alt} g_{i+1}^\omega$.

The following result is a generalization of Theorem~\ref{theo:shortest} to Galois words.
This result is motivated by Theorems~\ref{theo:suffix} and~\ref{theo:p_even_or_odd}.

\begin{theorem}
\label{theo:shortest2}
Let $w = g_1 g_2 \cdots g_n$ with $g_i$ Galois words satisfying $g_1^\omega \geq_{alt} g_2^\omega \ge_{alt} \cdots \geq_{alt} g_n^\omega$.
Let $m$ be the multiplicity of $g_1$.
Let $p$ be the shortest nontrivial prefix of $w$ such that
\begin{equation}
\tag{$\star$}
p^\omega \geq_{alt} w^\omega \; \mbox{ if } |p| \mbox{ is even} \quad \mbox{ and } \quad p^\omega \leq_{alt} w^\omega \; \mbox{ if } |p| \mbox{ is odd.}
\label{eqn:*}
\end{equation}
Then
\begin{itemize}
 \item[\rm (i)] if $|g_1|$ is odd, $m$ is even, and $m<n$, then $p = g_1^2$;
 \item[\rm (ii)] otherwise, $p = g_1$.
\end{itemize}
\end{theorem}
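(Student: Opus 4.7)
The plan is to first reformulate $(\star)$ into a parity-free condition, then verify that the candidates $g_1$ and $g_1^2$ satisfy $(\star)$ at the expected moment, and finally rule out every shorter nontrivial prefix. Applying Lemma~\ref{lem:uuvv2}(5) with $u=p$ and $v=t$ (where $w=pt$ and $t\neq 1$) shows that $(\star)$ is equivalent to the single inequality $p^\omega \geq_{alt} t^\omega$; the case $t=1$ is trivial. Thus the problem reduces to finding the shortest nontrivial prefix $p$ of $w$ with $p^\omega \geq_{alt} t^\omega$.

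I first establish by induction on $n-m$ the auxiliary inequality $g_1^\omega >_{alt} (g_{m+1}\cdots g_n)^\omega$ when $m<n$, chaining applications of Lemma~\ref{lem:uuvv2}(6) across the runs of equal Galois factors at the start of the tail. For $p=g_1$, one has $t=g_1^{m-1}h$ with $h=g_{m+1}\cdots g_n$, and Lemma~\ref{lem:uuvv2}(6) applied with $v=g_1^{m-1}$ and $u=h$, combined with the auxiliary, yields $t^\omega <_{alt} g_1^\omega$ iff $(m-1)|g_1|$ is even. Hence $(\star)$ fails for $p=g_1$ exactly when both $m-1$ and $|g_1|$ are odd, i.e., exactly in case~(i). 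The analogous calculation for $p=g_1^2$ and $t=g_1^{m-2}h$ in case~(i), where $m-2$ is now even, gives $(m-2)|g_1|$ even, so $(\star)$ does hold there.

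Finally, for a proper nontrivial prefix $p$ of $g_1$, write $g_1=pq$; Theorem~\ref{theo:suffix}(1) yields $p^\omega <_{alt} q^\omega$, and if $q$ is not a fractional power of $p$, Lemma~\ref{lem:p<s}(ii) immediately delivers $p^\omega <_{alt} (q\,g_1^{m-1}h)^\omega = t^\omega$, so $(\star)$ fails. The delicate case is when $q$ is a fractional power of $p$, i.e., $g_1=p^{k+1}p_1$ with $p_1$ a proper nontrivial prefix of $p$: then $g_1$ has the border $p_1$, and unlike classical Lyndon words, Galois words do admit borders of odd length, so the unborderedness argument used in the proof of Theorem~\ref{theo:shortest} does not transfer. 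My plan in
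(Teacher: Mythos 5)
Your proposal is incomplete and has one substantive error before it breaks off. First, the auxiliary inequality $g_1^\omega >_{alt} (g_{m+1}\cdots g_n)^\omega$ does \emph{not} follow by ``chaining applications of Lemma~\ref{lem:uuvv2}(6)''. The parity twist in that lemma is exactly what blocks the chain: writing $h_2 = g_{m+2}\cdots g_n$, if $|g_{m+1}|$ is odd and $g_{m+1}^\omega >_{alt} h_2^\omega$, then condition (6) gives $(g_{m+1}h_2)^\omega >_{alt} g_{m+1}^\omega$, i.e.\ the tail sits \emph{above} $g_{m+1}^\omega$, and knowing only $g_1^\omega >_{alt} g_{m+1}^\omega$ you cannot conclude by transitivity. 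This is precisely why the paper proves Lemma~\ref{lem:g_1>or<} (which records both possible signs, depending on the parities of $|g_1|$ and $m$) and why its proof, in the odd-length case, must show that $g_{m+1}$ is not a fractional power of $g_1$ --- using Lemma~\ref{lem:gprefixh} and Lemma~\ref{lem:hprefixg}, which rest on the border-parity Lemma~\ref{lem:border} --- and then invoke Lemma~\ref{lem:p<s}(ii) for the opposite order to get $g_1^\omega >_{alt} (g_{m+1}y)^\omega$ for \emph{any} $y$. Your parity bookkeeping for $p=g_1$ and $p=g_1^2$ is fine once this auxiliary fact is granted, but as stated the auxiliary is unproved and your proposed route to it fails.

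Second, the proof stops mid-sentence at what is in fact the bulk of the paper's argument: ruling out a proper prefix $p$ of $g_1$ when the complementary suffix is a fractional power of $p$, and (in case (i)) ruling out the prefixes $p$ with $|g_1| < |p| < 2|g_1|$, which you never mention. The paper handles the fractional-power case by showing via Lemma~\ref{lem:border} that the suffix $t$ in $g_1 = pt$ has odd length (so $|p|$ and $|g_1|$ have opposite parities), then splitting on the parity of $|g_1|$; the sub-case $|g_1|$ odd with $g_1=g_2$ further requires comparing the conjugates $(p_1p_2)^kp_1$ and $(p_2p_1)^kp_1$ and localizing the comparison inside a prefix of length $|p|$. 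The intermediate prefixes $p = g_1 q$ in case (i) need a separate argument split on the parity of $|q|$, again using Theorem~\ref{theo:p_even_or_odd} and Corollary~\ref{cor:g_1w}. None of this is routine, and none of it is present in your write-up.
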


Note that we can give an equivalent condition on $p$ in the previous statement.

\begin{lemma}
\label{lem:equiv}
Let $w=ps$ be a finite word, with $p \ne 1$.
Then $p$ satisfies condition $(\star)$ if and only if $p$ is such that one has either $s=1$ or $p^\omega \ge_{alt} s^\omega$.
\end{lemma}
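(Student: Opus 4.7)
The plan is to derive this directly from Lemma~\ref{lem:uuvv2}, specifically condition (5), applied with $u = p$ and $v = s$. That lemma already packages the parity-dependent translation we need: $p^\omega$ versus $s^\omega$ on one hand, and $p^\omega$ versus $(ps)^\omega = w^\omega$ on the other, with the direction of the inequality between $p^\omega$ and $w^\omega$ flipping according to whether $|p|$ is even or odd.

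First I would dispose of the two trivial cases. If $s = 1$ then both sides of the claimed equivalence hold vacuously. If $s \neq 1$ but $p^\omega = s^\omega$, then $p$ and $s$ are powers of a common word (by the standard fact recalled in Section~\ref{sec:def}), so $w = ps$ is also a power of that word, and $w^\omega = p^\omega = s^\omega$; consequently both $p^\omega \geq_{alt} s^\omega$ and condition $(\star)$ hold as equalities.

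For the remaining case $s \neq 1$ and $p^\omega \neq s^\omega$, I would apply Lemma~\ref{lem:uuvv2}, (1) $\Leftrightarrow$ (5), with $u = p$, $v = s$: when $|p|$ is even, $p^\omega <_{alt} s^\omega$ is equivalent to $p^\omega <_{alt} w^\omega$, and when $|p|$ is odd, $p^\omega <_{alt} s^\omega$ is equivalent to $p^\omega >_{alt} w^\omega$. Taking contrapositives of these two equivalences (and using that we are in the strict case so $<$ negates to $\geq$ and $>$ negates to $\leq$) yields, in both parities, the equivalence between $p^\omega \geq_{alt} s^\omega$ and condition $(\star)$.

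The argument is essentially bookkeeping since the real content sits in Lemma~\ref{lem:uuvv2}; the only point requiring care is to treat the equality case $p^\omega = s^\omega$ separately so that the passage between strict and non-strict inequalities is valid, which is why I would isolate it before invoking the lemma.
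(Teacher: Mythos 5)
Your proof is correct and follows essentially the same route as the paper, which also derives the equivalence directly from Lemma~\ref{lem:uuvv2} (condition (5) with $u=p$, $v=s$). Your explicit separation of the cases $s=1$ and $p^\omega = s^\omega$ is a sound extra precaution for passing from the strict inequalities of that lemma to the non-strict ones in $(\star)$, a point the paper leaves implicit.
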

\begin{proof}
From Lemma~\ref{lem:uuvv2} it follows that one has $p^\omega \geq_{alt} s^\omega \Leftrightarrow p^\omega \geq_{alt} w^\omega$ when $|p|$ is even and $p^\omega \geq_{alt} s^\omega \Leftrightarrow p^\omega \leq_{alt} w^\omega$ when $|p|$ is odd.
\end{proof}

In order to prove Theorem~\ref{theo:shortest2} we need several lemmata.

Recall, from Section~\ref{sec:classical-factorization} that classical Lyndon words have no border.
This is no more true for Galois words, as shown in the next lemma.

\begin{lemma}(\cite[ Proposition 3.1]{R2}
\label{lem:border}
If a Galois word has a border, then it has odd length.
\end{lemma}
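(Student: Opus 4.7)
The plan is to pit two earlier results of the paper against each other, applied to the border $b$ itself: point 2 of Theorem~\ref{theo:suffix} and Theorem~\ref{theo:p_even_or_odd}. The key observation that makes this work is that $b$ simultaneously satisfies the hypotheses of both: it is a nontrivial proper suffix of $w$, so point 2 of Theorem~\ref{theo:suffix} applies, and it is also a nontrivial proper prefix of $w$, so Theorem~\ref{theo:p_even_or_odd} applies.

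Concretely, I would write $w = bx = yb$ with $x, y$ nonempty. First, point 2 of Theorem~\ref{theo:suffix} applied to the factorization $w = y \cdot b$ produces $w^\omega <_{alt} b^\omega$. Second, Theorem~\ref{theo:p_even_or_odd} applied to the factorization $w = b \cdot x$ with prefix $p = b$ produces $b^\omega <_{alt} w^\omega$ when $|b|$ is even and $b^\omega >_{alt} w^\omega$ when $|b|$ is odd. The ``even'' branch is flatly incompatible with the first inequality, yielding the impossible chain $b^\omega <_{alt} w^\omega <_{alt} b^\omega$. Hence $|b|$ must be odd.

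The awkward part is the reading of the lemma's conclusion. The argument just sketched establishes that every border of a Galois word has odd length; it does not establish that the bordered Galois \emph{word} has odd length, and in fact the literal reading ``the word has odd length'' is contradicted by $w = abaa$, which is a Galois word of (even) length $4$ whose only border is $a$. I would therefore present Lemma~\ref{lem:border} under the reading that the pronoun ``it'' refers back to the border, in which case the two-line comparison above is the complete proof; the only real step is spotting that $b$ plays the prefix role and the suffix role simultaneously, so that the two parity-free and parity-sensitive inequalities can be combined.
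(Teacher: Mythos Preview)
Your argument is correct and complete: combining point~2 of Theorem~\ref{theo:suffix} (which gives $w^\omega <_{alt} b^\omega$ since $b$ is a nontrivial proper suffix) with Theorem~\ref{theo:p_even_or_odd} (which would give $b^\omega <_{alt} w^\omega$ if $|b|$ were even, since $b$ is a nontrivial proper prefix) immediately forces $|b|$ to be odd. The paper itself does not supply a proof of this lemma; it simply quotes~\cite[Proposition~3.1]{R2}, so there is nothing to compare your approach against.

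Your worry about the referent of ``it'' is also well founded, and your counterexample $w=abaa$ is valid---indeed $abaa$ appears in the paper's own list of Galois words, has even length~$4$, and has the border~$a$. The paper's later uses of Lemma~\ref{lem:border} in the proof of Theorem~\ref{theo:shortest2} (where it concludes that specific borders $t$, $p'$, and $pp'$ of $g_1$ have odd length) confirm that the intended reading is ``the border has odd length'', which is exactly what you establish.
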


\begin{lemma}
\label{lem:gprefixh}
Let $g,h$ be Galois words with $g^\omega <_{alt} h^\omega$ and $g$ a prefix of $h$.
Then $|g|$ is even.
\end{lemma}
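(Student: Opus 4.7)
The plan is to apply Theorem~\ref{theo:p_even_or_odd} directly to the Galois word $h$.

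First I would observe that since $g^\omega <_{alt} h^\omega$, in particular $g^\omega \neq h^\omega$, so $g$ is a \emph{proper} prefix of $h$. Hence we may write $h = gs$ where $s$ is a nonempty word, giving a nontrivial factorization of $h$.

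Next, since $h$ is a Galois word, Theorem~\ref{theo:p_even_or_odd} applied to this factorization tells us that $g^\omega <_{alt} h^\omega$ when $|g|$ is even and $g^\omega >_{alt} h^\omega$ when $|g|$ is odd. By hypothesis we are in the first case, so $|g|$ must be even.

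The proof is essentially a one-line consequence of Theorem~\ref{theo:p_even_or_odd}; there is no real obstacle beyond noting that $g \ne h$ so that the theorem applies to a genuinely nontrivial factorization. Note that the hypothesis that $g$ itself is a Galois word is not even used in this argument.
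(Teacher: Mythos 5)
Your proof is correct and takes essentially the same approach as the paper's: both apply Theorem~\ref{theo:p_even_or_odd} to the nontrivial factorization $h = gs$ and read off the parity of $|g|$ from the direction of the inequality $g^\omega <_{alt} h^\omega$. The paper states it contrapositively (if $|g|$ is odd then $g^\omega >_{alt} h^\omega$), but the content is identical, and your observation that the hypothesis that $g$ is itself a Galois word is unused applies equally to the paper's argument.
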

\begin{proof}
Let $g$ be a prefix of $h$ with $|g|$ odd.
Then by Theorem~\ref{theo:p_even_or_odd}, one has $g^\omega >_{alt} h^\omega$.
\end{proof}

\begin{lemma}
\label{lem:hprefixg}
Let $g,h$ be Galois words with $g^\omega <_{alt} h^\omega$.
If  $g$ is a strict fractional power of $h$ then $|g|$ is even.
\end{lemma}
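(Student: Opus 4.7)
The plan is to use Theorem~\ref{theo:p_even_or_odd} and Theorem~\ref{theo:suffix} to constrain the parities of $|h|$ and $|h_1|$, where we write $g = h^k h_1$ and $h = h_1 h_2$ with $k \geq 1$, as provided by the definition of strict fractional power. The hypothesis $g^\omega <_{alt} h^\omega$ excludes $g = h$. Moreover, every Galois word $g$ is primitive, since $g = u^r$ with $r \geq 2$ would yield, via the factorization $g = u \cdot u^{r-1}$, the contradiction $g^\omega <_{alt} (u^{r-1}u)^\omega = u^\omega = g^\omega$. This rules out $g = h^j$ for $j \geq 2$; in particular, when $k = 1$, we have both $h_1 \neq 1$ and $h_1 \neq h$.

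First, I would show that $|h|$ is odd and that $k = 1$. The factorization $g = h \cdot (h^{k-1} h_1)$ is nontrivial (since $g \neq h$), so Theorem~\ref{theo:p_even_or_odd} combined with $g^\omega <_{alt} h^\omega$ forces $|h|$ to be odd. If instead $k \geq 2$, then $g = h^2 \cdot (h^{k-2} h_1)$ is still a nontrivial factorization (the second factor is nonempty because primitivity rules out $g = h^2$), and Theorem~\ref{theo:p_even_or_odd} applied with the even-length prefix $h^2$ gives $h^\omega = (h^2)^\omega <_{alt} g^\omega$, contradicting the hypothesis. Hence $k = 1$ and $g = h \cdot h_1 = h_1 h_2 h_1$.

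It remains to show that $|h_1|$ is odd. Since $h_1$ is a nontrivial proper suffix of $g$, Theorem~\ref{theo:suffix} applied to the factorization $g = h \cdot h_1$ yields $h^\omega <_{alt} h_1^\omega$, which combined with $g^\omega <_{alt} h^\omega$ gives $h_1^\omega >_{alt} g^\omega$. Since $h_1$ is also a nontrivial proper prefix of $g$ (via $h = h_1 h_2$), Theorem~\ref{theo:p_even_or_odd} applied to the nontrivial factorization $g = h_1 \cdot (h_2 h_1)$ then forces $|h_1|$ to be odd. Therefore $|g| = |h| + |h_1|$ is the sum of two odd numbers and hence even. The only real obstacle is the careful bookkeeping of nontriviality of the factorizations used, which requires the primitivity of Galois words and the strictness of $g^\omega <_{alt} h^\omega$; once these are in hand, the argument reduces to three clean parity deductions from Theorem~\ref{theo:p_even_or_odd}.
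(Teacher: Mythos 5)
Your proof is correct, but it takes a genuinely different route from the paper's. Writing $g=h^kh'$ with $h=h'h''$, the paper never determines $k$: it assumes $|g|$ odd for contradiction, strips the common prefix $g$ from $g^\omega = g\,h'h''s$ and $h^\omega = g\,h''h't$ (which flips the inequality because $|g|$ is odd), notes that the resulting comparison takes place within the first $|h|$ letters since $h'h''\neq h''h'$, and concludes $h^\omega >_{alt} (h''h')^\omega$ via Lemma~\ref{lem:comparison2}, contradicting the fact that $h$ is a Galois word. You instead apply the prefix characterization of Theorem~\ref{theo:p_even_or_odd} to $g$ three times (with prefixes $h$, $h^2$ and $h_1$), plus point~1 of Theorem~\ref{theo:suffix} once, to pin down the parities directly; each application is to a legitimately nontrivial factorization, and your primitivity argument correctly rules out the degenerate cases $g=h$, $g=h^2$, $h_1=1$. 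There is no circularity, since Theorem~\ref{theo:p_even_or_odd} is established before this lemma and does not depend on it. Your approach buys strictly more information --- it shows $k=1$ and that $|h|$ and $|h_1|$ are each odd, hence $|h|<|g|<2|h|$ --- while the paper's argument avoids any case split on $k$ and uses only the definition of Galois word applied to $h$ together with the low-level comparison lemmas, rather than the Ufnarovskij-type characterization of $g$ by its prefixes.
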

\begin{proof}
Let $k \ge 1$ and $h'$ a prefix of $h$ such that $g = h^k h'$.
The factorization $h = h'h''$ is not trivial since $h^\omega \neq g^\omega$.
Both $g^\omega$ and $h^\omega$ have $g$ as a prefix, and since $g^\omega <_{alt} h^\omega$, we have $g h' h'' s <_{alt} g h'' h' t$, where $s = h^{k-1} h' g^\omega$ and $t = h'' h^\omega$.

Let us suppose by contradiction that $|g|$ is odd.
By the remark at the beginning of Section~\ref{sec:chargalois}, we have $h'h'' s >_{alt} h''h' t$.
Since $|h' h''| = |h'' h'| = |h|$ but $h' h'' \ne h'' h'$, the comparison of the last inequality takes place within the prefix of length $|h|$.
Thus, by Lemma~\ref{lem:comparison2}, one has $h^\omega >_{alt} (h''h')^\omega$, which is impossible since $h$ is a Galois word.
\end{proof}

\begin{lemma}
\label{lem:g_1>or<}
Let $w = g_1 g_2 \cdots g_n$ with $g_i$ Galois words satisfying $g_1^\omega \ge_{alt} g_2^\omega \ge_{alt} \cdots \ge g_n^\omega$.
Let $m$ be the multiplicity of $g_1$ and assume that $m < n$ and that $n \geq 2$.
Then:
\begin{itemize}
 \item[\rm (i)] if $|g_1|$ is odd and $m$ is even, then $g_{1}^\omega <_{alt} (g_2 \cdots g_n)^\omega$;
 \item[\rm (ii)] otherwise, $g_{1}^\omega >_{alt} (g_2\cdots g_n)^\omega$.
\end{itemize}
\end{lemma}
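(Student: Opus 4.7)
Write $w'' = g_{m+1} \cdots g_n$, a nonempty word since $m < n$. The plan is to reduce both statements (i) and (ii) to the single inequality $g_1^\omega >_{alt} w''^\omega$, and then to prove this by splitting on the parity of $|g_{m+1}|$.

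For the reduction, observe that $(g_2 \cdots g_n)^\omega = (g_1^{m-1} w'')^\omega$ shares with $g_1^\omega$ the common prefix $g_1^{m-1}$, of length $(m-1)|g_1|$. By the observation at the start of Section~\ref{sec:chargalois}, comparing $g_1^\omega$ with $(g_2\cdots g_n)^\omega$ is the same (up to flipping) as comparing $g_1^\omega$ with $(w''\, g_1^{m-1})^\omega$: the direction is preserved when $(m-1)|g_1|$ is even and flipped when it is odd. A short parity check ($(m-1)|g_1|$ is odd iff $|g_1|$ odd and $m$ even) shows that case~(i) corresponds exactly to the flipped sub-case, so both (i) and (ii) become $g_1^\omega >_{alt} (w''\, g_1^{m-1})^\omega$. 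Iterating the equivalence (1)$\Leftrightarrow$(3) of Lemma~\ref{lem:uuvv} $(m-1)$ times (with $u = g_1$ and $v$ successively $w'',\ w''g_1, \ldots$) strips off the trailing $g_1^{m-1}$, leaving $g_1^\omega >_{alt} w''^\omega$ as the sole target.

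Assume first that $|g_{m+1}|$ is odd. Then $g_{m+1}$ cannot be a fractional power of $g_1$: Lemmas~\ref{lem:gprefixh} and~\ref{lem:hprefixg} each force $|g_{m+1}|$ to be even whenever $g_{m+1}^\omega <_{alt} g_1^\omega$ is combined with a prefix or a strict fractional-power relation. Applying Lemma~\ref{lem:p<s}(ii) to the opposite order $\tilde<_{alt}$ then yields $g_1^\omega >_{alt} (g_{m+1}\, y)^\omega$ for every finite word $y$; taking $y = g_{m+2} \cdots g_n$ delivers $g_1^\omega >_{alt} w''^\omega$.

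Assume instead that $|g_{m+1}|$ is even, and argue by induction on $n$; the base $n = 2$ is immediate as $m = 1$ and $w'' = g_2$. For the inductive step, if the multiplicity $m''$ of $g_{m+1}$ in $w''$ equals $n-m$, then $w''^\omega = g_{m+1}^\omega <_{alt} g_1^\omega$ and we are done. Otherwise $m'' < n-m$ and the induction hypothesis, applied to the nonincreasing factorization $w'' = g_{m+1} \cdots g_n$, yields the case~(ii) conclusion (case~(i) is excluded since $|g_{m+1}|$ is even), namely $g_{m+1}^\omega >_{alt} (g_{m+2} \cdots g_n)^\omega$. Lemma~\ref{lem:uuvv2}(5) with $u = g_{m+1}$ of even length and $v = g_{m+2} \cdots g_n$ then upgrades this to $g_{m+1}^\omega >_{alt} w''^\omega$, and combining with $g_1^\omega >_{alt} g_{m+1}^\omega$ completes the inductive step. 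The most delicate part of the whole argument is the parity bookkeeping in the initial reduction, where the match between the parity of $(m-1)|g_1|$ and the case distinction (i) versus (ii) has to line up exactly; once this reduction is in place, the dichotomy on $|g_{m+1}|$ and the structural lemmas already in the paper finish the proof cleanly.
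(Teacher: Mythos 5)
Your proof is correct and rests on exactly the same ingredients as the paper's: Lemmas~\ref{lem:gprefixh} and~\ref{lem:hprefixg} to rule out a fractional-power relation and then Lemma~\ref{lem:p<s}(ii) for the opposite order in the odd-length case, Lemmas~\ref{lem:uuvv} and~\ref{lem:uuvv2} together with the parity remark of Section~\ref{sec:chargalois} for the bookkeeping, and induction on the number of factors in the even-length case. The only difference is organizational: you carry out the reduction to $g_1^\omega >_{alt} (g_{m+1}\cdots g_n)^\omega$ uniformly at the outset and then split on the parity of $|g_{m+1}|$, whereas the paper splits on the parity of $|g_2|$ and performs that same reduction (peeling off $g_1^{m-1}$ with the parity flip) only inside the sub-case $g_1=g_2$; your version is arguably tidier, since the case (i)/(ii) dichotomy is dispatched once and for all by the single parity check on $(m-1)|g_1|$.
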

\begin{proof} 
If $n=2$ then we have $m=1$ and thus $g_1^\omega >_{alt} g_2^\omega$., i.e., $m$ is odd and condition (ii) holds.

Suppose now that $n \geq 3$.
Let us first suppose that $|g_2|$ is even.
If $g_2 = \cdots = g_n$, then necessarily we have $g_1 \neq g_2$, since $m<n$.
Therefore, $g_1^\omega >_{alt} g_2^\omega = (g_2\ldots g_n)^\omega$ and $m=1$ is odd, so we are in case (ii).
If $g_2, \ldots, g_n$ are not all equal, we can argue by induction on $n$.
Thus since $|g_2|$ is even we have $g_2^\omega >_{alt} (g_3 \cdots g_n)^\omega$. 
By applying Lemmata~\ref{lem:uuvv} and~\ref{lem:uuvv2} we find that $g_2^\omega >_{alt} (g_2 g_3 \cdots g_n)^\omega$.
Finally, since $g_1^\omega \ge_{alt} g_2^\omega$, we have $g_1^ \omega >_{alt} (g_2 \cdots g_n)^\omega$; moreover, either $m$ is odd, or $m$ is even and then $g_1=g_2$ and $|g_1|$ is even, so that we are in case (ii).

Suppose now that $|g_2|$ is odd.
We assume first that $g_1 \neq g_2$, i.e., $m=1$.

We have $g_1^\omega >_{alt} g_2^\omega$. We show that $g_2$ is not a fractional power of $g_1$. Indeed, 
$g_2$ is not a prefix of $g_1$ by Lemma \ref{lem:gprefixh}; moreover, by Lemma \ref{lem:hprefixg}, $g_2$ is 
not a strict fractional power of $g_1$; since being a fractional power is equivalent to be a prefix, or a strict fractional power, we are done.

%Since $g_1^\omega >_{alt} g_2^\omega$, the comparison between $g_1^\omega$ and $g_2^\omega$ is within the first factor $g_2$.
%Indeed, $g_2$ is not a prefix of $g_1$ (by Lemma~\ref{lem:gprefixh}), and
%if $g_1$ is a prefix of $g_2$ then $g_2$ has not period $|g_1|$ (by Lemma~\ref{lem:hprefixg}).
Hence, by Lemma~\ref{lem:p<s} (ii) (applied to the opposite order), it follows that $g_1^\omega >_{alt} (g_2 \cdots g_n)^\omega$.

Finally, let us consider the case $g_1 = g_2$ and $|g_2|$ odd.
We have $m \geq 2$ and $w = g_1^m g_{m+1} \cdots g_n$, with $g_1^\omega >_{alt} g_{m+1}^\omega$.
By induction applied to $g_1 g_{m+1} \cdots g_n$ we have $g_1^\omega >_{alt} (g_{m+1} \cdots g_n)^\omega$, i.e., condition (ii) holds.
Since $|g_1| = |g_2|$ is odd, one deduces, by using Lemmata~\ref{lem:uuvv} and~\ref{lem:uuvv2}, that $g_1^\omega <_{alt} (g_1^{m-1} g_{m+1} \cdots g_n)^\omega$ when $m$ is even and $g_1^\omega >_{alt} (g_1^{m-1} g_{m+1} \cdots g_n)^\omega$ when $m$ is odd.
\end{proof}

An interesting consequence of the previous lemma is the following.

\begin{corollary}
\label{cor:g_1w}
Let $w = g_1 g_2 \cdots g_n$ with $g_i$ Galois words satisfying $g_1^\omega \ge_{alt} g_2^\omega \ge_{alt} \cdots \ge g_n^\omega$.
Let $m$ be the multiplicity of $g_1$.
One has $g_1^\omega >_{alt} w^\omega$ if $|g_1^m|$ is even and $g_1^\omega <_{alt} w^\omega$ if $|g_1^m|$ is odd.
\end{corollary}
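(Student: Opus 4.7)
The plan is to reduce the multi-factor statement to a two-factor comparison and then invoke Lemma~\ref{lem:uuvv2}. We may assume $m<n$: otherwise $w=g_1^m$, so $w^\omega=g_1^\omega$ and neither strict inequality can hold, so there is nothing to prove. Write $w = g_1^m \cdot v$ with $v = g_{m+1} \cdots g_n$, and note that by the definition of multiplicity we have $g_1^\omega >_{alt} g_{m+1}^\omega$, strictly.

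The first step is to apply Lemma~\ref{lem:g_1>or<} to the shortened sequence $g_1, g_{m+1}, \ldots, g_n$. This sequence is still nonincreasing for $\ge_{alt}$, and in it the multiplicity of $g_1$ is $1$, which is odd. Consequently we cannot be in case (i) of that lemma, so case (ii) applies and yields $g_1^\omega >_{alt} v^\omega$, independently of the parities of $|g_1|$ and $m$.

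The second step is to invoke the equivalence between conditions (1) and (5) of Lemma~\ref{lem:uuvv2} with the single word $u' := g_1^m$ playing the role of $u$ and with $v$ as above. Then $(u')^\omega = g_1^\omega$ and $(u'v)^\omega = w^\omega$. Moreover $(u')^\omega \ne v^\omega$ (strictly greater), which forces $(u')^\omega \ne (u'v)^\omega$, since otherwise $u'$ and $v$ would be powers of a common word and $(u')^\omega = v^\omega$ would follow. Since condition (1) fails strictly (in the opposite direction, $(u')^\omega >_{alt} v^\omega$), condition (5) must also fail strictly in the opposite direction, which reads: $g_1^\omega >_{alt} w^\omega$ if $|u'| = |g_1^m|$ is even, and $g_1^\omega <_{alt} w^\omega$ if $|g_1^m|$ is odd. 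This is exactly the corollary.

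The main obstacle I anticipate is keeping careful track of which parity governs which inequality: Lemma~\ref{lem:g_1>or<} separates the parities of $|g_1|$ and $m$, whereas the desired conclusion depends only on the parity of the product $|g_1^m| = m|g_1|$. The cleanest way to bypass this bookkeeping is precisely the grouping above: treat $g_1^m$ as an indivisible word in the second step so that Lemma~\ref{lem:uuvv2} is applied only once and its parity input is automatically that of $|g_1^m|$, removing any need to argue by cases on $|g_1|$ and $m$ individually.
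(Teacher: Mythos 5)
Your proof is correct and follows essentially the same route as the paper's: the paper's one-line proof applies Lemma~\ref{lem:uuvv2} with $v = g_1^m$ and $u = g_{m+1}\cdots g_n$, reading off condition (6) directly rather than negating condition (5) with the roles of $u$ and $v$ swapped as you do, and the needed hypothesis $g_1^\omega >_{alt} (g_{m+1}\cdots g_n)^\omega$ is supplied, exactly as in your first step, by the preceding Lemma~\ref{lem:g_1>or<}. Your remark that the case $m=n$ must be set aside is a fair observation about the statement (which tacitly inherits the hypotheses $m<n$ and $n\ge 2$ of Lemma~\ref{lem:g_1>or<}) rather than a gap in the argument.
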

\begin{proof}
This follows from Lemma \ref{lem:uuvv2} with $v = g_1^m$ and $u = (g_{m+1} \cdots g_n)$.
\end{proof}

We can now prove the main result of the section.

\begin{proof}[Proof of Theorem~\ref{theo:shortest2}]
Let us prove first that the two prefixes, $g_1$ for the case (ii) and $g_1^2$ for the case (i) satisfy condition $(\star)$.
If we are under the hypotheses of case (ii), i.e., if $|g_1|$ is even or $m$ is odd (the two conditions are not mutually exclusive), then $g_1^\omega >_{alt} (g_2 \cdots g_n)^\omega$ by Lemma~\ref{lem:g_1>or<}.
If we are under the hypotheses of case (i), then we have $n-1 \ge 2$ and the multiplicity of $g_1$ in $g_1g_3 \cdots g_n$ is odd.
Hence, applying Lemma~\ref{lem:g_1>or<} we find that $(g_1)^\omega = (g_1^2)^\omega >_{alt} (g_3 \cdots g_n)^\omega$.
In both cases the result follows from Lemma~\ref{lem:equiv}.

Let us now prove that any nontrivial proper prefix of $g_1$ in case (ii) and of $g_1^2$ in case (i) does not satisfy condition $(\star)$.

Let $g_1 = p t$ be a non trivial factorization of $g_1$ and $s = t g_2 \ldots g_n$.
By Theorem~\ref{theo:suffix}, we have $p^\omega <_{alt} t^\omega$.
If the comparison in the last inequality takes place within the first $t$ of $t^\omega$, then $p^\omega <_{alt} s^\omega$, and we can conclude by using Lemma~\ref{lem:equiv}.

Otherwise, by Lemma~\ref{lem:fractional}, $t$ is a fractional power of $p$, i.e., we can write $t = p^r$, $r \in \mathbb{Q} \setminus \mathbb{N}$ (since $g_1 = pt$ is primitive).
We claim that $|t|$ is odd.
Indeed, if we suppose that $r<1$, then $t$ is a prefix of $p$, hence of $g_1$, so that $|t|$ is odd by Lemma \ref{lem:border}.
If we suppose that $r>1$, then we can write $t = p^h p'$, with $h \geq 1$ and $p = p' p''$.
Thus $g_1 = (p' p'')^{h+1} p'$ and $p', p p'$ are both borders of $g_1$.
This implies by Lemma~\ref{lem:border} that $|p'|$ is odd and $|p|$ even, hence $|t|$ is odd.
This implies that $|p|$ is even if and only if $|g_1|$ is odd.

Let us suppose first that $|g_1|$ is odd.
Then $p^\omega <_{alt} g_1^\omega$ by Theorem~\ref{theo:p_even_or_odd}.
If we are in case (ii), we have $g_1^\omega \leq_{alt} 
w^\omega$ by Corollary~\ref{cor:g_1w}, hence $p^\omega <_{alt} w^\omega$, as we wanted to prove.
If we are in case (i), then $g_1 = g_2$.
If the comparison between $p^\omega$ and $g_1^\omega$ takes place within the first $g_1$ of $g_1^\omega$, we conclude that $p^\omega <_{alt} (g_1 g_2 \cdots g_n)^\omega = w^\omega$.
Otherwise, by Lemma~\ref{lem:fractional}, we can write $g_1 = p^k p_1$, with $k \geq 1$, and $p = p_1 p_2$.
Since $g_1 = (p_1 p_2)^k p_1$ is a Galois word, we have $((p_1 p_2)^k p_1)^\omega <_{alt} ((p_2 p_1)^k p_1)^\omega$.
Since $|(p_1 p_2)^k p_1| = |(p_2 p_1)^k p_1|$, the comparison in the previous (strict) inequality takes place in the prefix of their common length, hence in their prefix of length $|p| = |p_1 p_2| = |p_2 p_1|$.
Thus, $(p_1 p_2)s_0 <_{alt} (p_2 p_1)t_0$ for all infinite words $s_0, t_0$.

Since $g_2$ has $p_1 p_2$ as a prefix, we deduce that $(g_2 \cdots g_n g_1)^\omega <_{alt} (p_2 p_1)^\omega$.
Therefore, since $|g_1|$ is odd, we have
$$
 p^\omega = (p_1 p_2)^k p_1 (p_2 p_1)^\omega =
 g_1 (p_2 p_1)^\omega <_{alt}
 g_1 (g_2 \cdots g_n g_1)^\omega =
 w^\omega.
$$
Let us suppose now that $|g_1|$ is even (and thus $|p|$ is odd).
From Theorem~\ref{theo:p_even_or_odd} it follows that $p^\omega >_{alt} g_1^\omega$ and from Corollary~\ref{cor:g_1w} it follows that $g_1^\omega \ge_{alt} w^\omega$.
Thus $p^\omega >_{alt} w^\omega$.

We have proved that no nontrivial proper prefix of $g_1$ satisfy condition $(\star)$.
It remains to prove that, under the hypotheses of case (i), each proper prefix of $g_1^2$ of length at least $|g_1|$ does not satisfy condition $(\star)$.

Since we are in case (i), $|g_1|$ is odd,  $m$ is even and $g_1 \ne g_n$.
Using Lemma~\ref{lem:g_1>or<} we have that $g_1^\omega <_{alt} (g_2 \cdots g_n)^\omega$.
Thus it follows from Lemma~\ref{lem:equiv} that $g_1$ does not satisfy condition $(\star)$.

Let now consider $p = g_1 q$ with $g_1 = qt$ a nontrivial factorization of $g_1$.
If $|q|$ is even, and thus $|p|$ is odd, we have $q^\omega <_{alt} g_1^\omega$ by Theorem~\ref{theo:p_even_or_odd}.
Using Lemma~\ref{lem:uuvv2} and Corollary~\ref{cor:g_1w} we find
$$
 p^\omega = (g_1 q)^\omega >_{alt} g_1^\omega \ge_{alt} w^\omega.
$$
Finally, let us suppose that $|q|$ is odd, and thus $|p|$ and $|t|$ are even.
By Theorem~\ref{theo:suffix} we have $(qt)^\omega = g_1^\omega <_{alt} t^\omega$.
Since $t$ is not a prefix of $g_1$ (being of even length, it cannot be a border of $g_1$), the comparison is within the first $t$ of $t^\omega$.
Thus we have $(qtq)^\omega <_{alt} (t g_3 \cdots g_n qtq)^\omega$.
Since $|qtq|$ is even and $g_1 = g_2 = qt$, we deduce that
$$
 p^\omega = (g_1 q)^\omega = (qtq)^\omega <_{alt} (qtq t g_3 \cdots g_n)^\omega = w^\omega.
$$
Therefore, in both cases $p$ does not satisfy condition $(\star)$.
\end{proof}

\begin{example}
Let us consider the word $w = abbabbabaa$ using the alternating order of Example~\ref{ex:somegalois}.
The nonincreasing factorization of $w$ in Galois words is $w = (abb)(abb)(abaa)$.
One can check that $((abb)^2)^\omega >_{alt} w^\omega$, and that each nontrivial proper prefix of $(abb)^2$ does not satisfy condition $(\star)$ of Theorem~\ref{theo:shortest2}.
\end{example}

\section{Remarks and open problems}
\label{sec:conclusions}

Generalized Lyndon words, defined by using orders different than the usual lexicographical one, have different behaviors than classical Lyndon words.

For instance, we have seen that generalized Lyndon words are not, in general, unbordered, as is the case for classical Lyndon words.
Moreover, it is known that for each primitive word $w$, its unique conjugate which is a classical Lyndon word, is one of the elements in the nonincreasing factorization of $ww$ into Lyndon words (see, e.g. \cite[Section 7.4.1]{R1} and~\cite{L3} where are given algorithms to compute this unique conjugate).
This is no more true for generalized Lyndon words, as shown in the next example.

\begin{example}
\label{ex:counter}
Let us consider the primitive word $w = baa$.
Let us consider the alternating order on $\{ a,b\}^\infty$ with $a <_1 b$.
The unique Galois word conjugate to $w$ is $u = aba$.
The nonincreasing factorization of $ww$ into Galois words is $ww = (b)(a)(abaa)$.
\end{example}

In the classical case it is easy to show that we have a symmetric result of Corollary~\ref{cor:factorization2}, namely that $\ell_1^\omega$ is maximum along all $p^\omega$, with $p$ a nontrivial prefix of $w$.
This is not true for general orders, as shown in the next example.

\begin{example}
\label{ex:counter2}
Let us consider the word $w=abab$ and the alternating order of Example~\ref{ex:counter}.
Its nonincreasing factorization in Galois word is $w = (ab)(ab)$.
If we consider the nontrivial proper prefix $a$, we have $a^\omega >_{alt} (ab)^\omega$.
\end{example}

Moreover, using the same notation as before, it is not true in general that $\ell_1$ is the longest among all prefixes of $w$ which are generalized Lyndon words (this is true for classical Lyndon words, see, e.g., \cite[Lemma 7.14 (iii)]{R1}).

\begin{example}
\label{ex:counter3}
Let $w = abab$ as in Example~\ref{ex:counter2}.
The prefix $aba$ is longer than $ab$ and it is also a Galois word.
\end{example}

For classical Lyndon words, it is known that the unique nonincreasing factorization of a word in Lyndon words is also the factorization into Lyndon words which has the less number of factors\footnote{This follows easily from the property: if $u,v$ are Lyndon words and $u<_{lex}v$, then $uv$ is a Lyndon word, see \cite[Proposition 5.1.3]{L}}.
This is no more true for generalized Lyndon words, as shown in the next example.

\begin{example}
\label{ex:counter4}
Let us consider the alternating lexicographical order of Example~\ref{ex:counter}.
Then $w=(ab)(ab)(ab)$ is a word with its nonincreasing factorization in Galois words.
The word $w$ admits a shorter factorization into Galois words, namely $w= (ababa)(b)$.
\end{example}

In~\cite{D}, Duval shows that given a finite word $w$, it is possible to compute  in linear time its nonincreasing factorization into classical Lyndon words.

\begin{open}
\label{open:duval}
Generalize Duval's algorithm to generalized Lyndon words.
\end{open}

In the same paper, Duval also proposed a second algorithm generating all Lyndon words of length $\leq n$.
A consequence of this algorithm is that the number of Lyndon words of length at most $n$ is equal to the number of words of length $n$ that are prefixes of a Lyndon word plus $1$.
This property is no more true for a generalized order, as shown in the next example.

\begin{example}
\label{ex:duval}
Let us consider $A = \{ a,b,c \}$ with the usual order.
The only $6$ Lyndon words of length at most $2$ are $a,b,c,ab,ac$ and $bc$.
It is easy to check that there are exactly $5$ words of length $2$ which are prefixes of a Lyndon word, namely $aa, ab, ac, bb$ and $bc$.

Let us now consider $A$ with the alternating order given by $a <_1 b <_1 c$.
There are $6$ Galois words of length at most $2$ (namely $a,b,c, ab, ac$ and $bc$) but only $3$ words of length $2$ which are prefixes of Galois words (namely $ab, ac$ and $bc$).
\end{example}

Finally, all along the paper we only considered finite Lyndon words.
In~\cite{S} are defined infinite Lyndon words: these are the infinite words which have infinitely many prefixes that are (finite) Lyndon words. Then the authors of \cite{S} prove that $x$ is an infinite Lyndon word if and only if $x$ is smaller than any of its nontrivial proper suffixes (Proposition 2.2). They prove also that each infinite word $x$ is equal to a nondecreasing product of finite Lyndon words and perhaps one infinite one (Proposition 2.3). This means that either $x = \ell_1 \ell_2 \cdots \ell_n$, with $\ell_1, \ldots, \ell_{n-1}$ finite Lyndon words and $\ell_n$ an infinite one, or $x = \ell_1 \ell_2 \cdots$ is an infinite product of finite Lyndon words; in both cases, $\ell_1 \geq \ell_2 \geq \ldots$.

Thus, following \cite{BC} and \cite{S}, we say that an infinite word $x$ is  a {\em generalized infinite Lyndon word} if $x$ is smaller that any of its nontrivial proper suffixes.
It would be interesting to generalize this result to Galois words and other generalized Lyndon words.

\begin{open}
Prove that each infinite word can be factorized in a unique way as a nonincreasing product of finite and infinite generalized Lyndon words.
\end{open}

\end{document}